\documentclass[runningheads]{llncs}
\usepackage[T1]{fontenc}
\usepackage{graphicx}
\usepackage{amsfonts}
\usepackage{amsmath}
\usepackage[hypertexnames=false]{hyperref}
\usepackage{braket}
\usepackage{multirow}
\usepackage{diagbox}
\usepackage{pifont}
\usepackage{algorithm}
\usepackage{algpseudocode}
\usepackage{makecell}

\usepackage{enumitem}
\usepackage{float}
\usepackage{tikz}
\usepackage{pgfplots}
\pgfplotsset{compat=1.18}
\usepackage{xcolor}
\usepackage{listings}
\usepackage{subcaption}
\usepackage{orcidlink}
\usepackage{marvosym}

\lstdefinelanguage{MyPython}{
  language=Python,
  morekeywords={new_qubit,measure}, 
}

\begin{document}
\title{QSeqSim: A Symbolic Simulator for Qiskit While Loops Using Sequential Quantum Circuits}

\titlerunning{QSeqSim}


%
%

\author{Zihao Li\inst{1,2}~\orcidlink{0009-0002-1768-6662}, Ji Guan\inst{1}\textsuperscript{(\Letter)}~\orcidlink{0000-0002-3490-0029}, Mingsheng Ying\inst{3}~\orcidlink{0000-0003-4847-702X}
}

\authorrunning{Z. Li et al.}

\institute{
Key Laboratory of System Software (Chinese Academy of Sciences), Institute of Software, Chinese Academy of Sciences, Beijing, 100190, China\\
\email{lizh@ios.ac.cn, guanj@ios.ac.cn}
\and
University of Chinese Academy of Sciences, Beijing, 101408, China
\and
Centre for Quantum Software and Information, University of Technology Sydney, Ultimo NSW 2007, Australia\\
\email{Mingsheng.Ying@uts.edu.au}
}

%
\maketitle              
\begin{abstract}
We present a tool \emph{QSeqSim}, a Qiskit-integrated symbolic backend that fills the current gap of having no Qiskit-native support for simulating \texttt{while}-loop quantum programs and their induced sequential quantum circuits. \emph{QSeqSim} takes Qiskit \texttt{QuantumCircuit} objects, translates them into OpenQASM~3 code, and organises the resulting program into a combination of combinational, dynamic, and sequential circuits, thereby assigning \texttt{while}-loops a precise sequential circuit semantics with explicit internal and external qubits. Building on this semantics, \emph{QSeqSim} adopts a Binary Decision Diagram (BDD)-based symbolic representation and integrates weighted model counting to compute measurement probabilities efficiently by exploiting sharing in structured and sparse BDDs. On top of this Boolean backbone, it introduces dedicated symbolic operators for quantum state composition and state retention, thereby enabling efficient symbolic execution of sequential quantum circuits. Our experiments demonstrate that \emph{QSeqSim} scales to substantial \texttt{while}-induced sequential circuits; in particular, in the quantum random walk benchmark we successfully simulate circuits with over 1000 qubits for more than 10 loop iterations.

\emph{QSeqSim} is available at \url{https://github.com/Veri-Q/QSeqSim}.

\keywords{Quantum circuit simulation \and Symbolic simulation \and Sequential quantum circuits \and Boolean operations \and Formal methods.}

\end{abstract}
\section{Introduction}

The formal methods community has devoted substantial effort to the analysis and
verification of quantum circuits and programs, both at the foundational level
and in the form of automated tools.
There is now a considerable body of work on quantum Hoare logic~\cite{ying2012floyd,zhou2019applied,liu2019formal,unruh2019quantum},
as well as on other quantum program logics,  deductive
verification and semantics for languages with
measurements and classical control~\cite{selinger2004towards,ying2024foundations}.
On the circuit side, equivalence checking and optimisation techniques based on
decision diagrams~\cite{zulehner2018advanced,burgholzer2020advanced,tsai2021bit,wei2022accurate,chen2025sliqsim,wang2025feynmandd},
ZX-calculus~\cite{kissinger2019pyzx,peham2022equivalence},
tree automata~\cite{chen2023autoq,chen2023automata,chen2025autoq},
and model counting~\cite{mei2024simulating,mei2024equivalence,quist2024advancing}
have been developed, and model checking and
abstract-interpretation approaches for quantum and probabilistic systems have
also been investigated~\cite{gay2005probabilistic,papanikolaou2009model,feng2013model,ying2021model}.

In parallel with these developments, Qiskit~\cite{javadi2024quantum}, one of the most
widely used quantum programming frameworks, has recently enriched the
traditional circuit model with classical control, allowing users to express
quantum algorithms with high-level constructs such as conditional branches
(\texttt{if}/\texttt{else}, \texttt{switch}) and loops (\texttt{for},
\texttt{while}). In particular, the availability of \texttt{while}-loops makes
it possible to encode measurement-controlled iteration patterns such as
repeat-until-success (RUS) schemes~\cite{paetznick2013repeat,bocharov2015efficient},
weakly measured variants of Grover's search~\cite{andres2022weakly}, and quantum
random walks with feedback~\cite{kendon2003decoherence}. At the level of
abstract syntax, this brings mainstream quantum programming closer to the
control-flow constructs long studied in quantum programming
languages~\cite{selinger2004towards} and quantum Hoare logics~\cite{ying2012floyd}.

On the execution side, however, the situation is markedly less mature: 

{\emph{At present, there is no Qiskit-native tool that can simulate Qiskit}
\texttt{while}\emph{-loop programs.}}

Qiskit-Aer, the default simulator in the Qiskit ecosystem, currently
\emph{does not} support the \texttt{while}-loop construct at all.
In particular, attempting to run the official Qiskit API examples that use
\texttt{QuantumCircuit.while\_loop} through Aer results in parsing or
compilation errors, indicating that such control flow cannot be lowered to a
circuit that Aer is able to simulate. A similar limitation holds for the Qiskit
cloud backends: when submitting circuits that contain \texttt{while}-loop
constructs, the service explicitly reports that the available real-device
backends \emph{do not} support \texttt{while}-loop constructs.
In other words, although Qiskit and OpenQASM~3~\cite{cross2022openqasm3} expose
\texttt{while}-loops at the programming and IR levels, neither the standard
Qiskit simulator nor the current hardware backends can execute these constructs,
and there is, at present, no Qiskit-native execution engine that realises the
intended iterative semantics of non-trivial \texttt{while}-loop programs.

To bridge this gap, we present \emph{QSeqSim}, a symbolic simulation backend for Qiskit programs with classical control, with a particular emphasis on \texttt{while}-loop–induced sequential behaviour. On the semantic side, we interpret Qiskit \texttt{while}-loops as \emph{sequential quantum circuits (SQCs)} with an explicit partition into external and internal qubits, thereby making precise the notion of quantum state feedback across loop iterations. On the algorithmic side, we build on a BDD-based representation of quantum circuits~\cite{tsai2021bit} and extend it from purely combinational circuits to programs with structured control flow: \texttt{if}/\texttt{switch} constructs are treated as dynamic circuits with mid-circuit measurements, \texttt{for} loops with static bounds as repeated combinational circuits, and \texttt{while} loops as sequential circuits. At the Boolean level, we introduce new symbolic operators for \emph{state composition} and \emph{state retention}, which model how external inputs are combined with the current internal state and how the post-measurement internal state is fed back into the next iteration. These operators are integrated with existing Boolean gate update rules and a Boolean mid-circuit measurement operator into a unified BDD-based engine that executes mixed combinational, dynamic, and sequential fragments according to a small-step operational semantics, while systematically exploiting weighted model counting~\cite{chavira2008probabilistic} techniques to carry out probability computation efficiently on structured and sparse BDDs.

{\textbf{Contributions.}}
This paper makes the following main contributions.

\begin{enumerate}
  \item \emph{First Qiskit backend with direct} \texttt{while}\emph{-loop support.}
        We present \emph{QSeqSim}, to the best of our knowledge, the first backend
        that directly executes Qiskit programs containing \texttt{while}-loops,
        by lowering them to OpenQASM~3 and giving an executable small-step
        semantics for the resulting control flow.

  \item \emph{From combinational to sequential Boolean models.}
        \emph{QSeqSim} extends a BDD-based encoding of combinational quantum circuits~\cite{tsai2021bit} with symbolic operators for state composition and retention, lifting the Boolean model from purely combinational to sequential circuits so as to support \texttt{while}-loops.

   \item \emph{Efficient and scalable symbolic simulation of} \texttt{while}\emph{-loop programs (SQCs).}
      Leveraging mature BDD backends and model-counting techniques for efficient probability evaluation,
      \emph{QSeqSim} symbolically simulates quantum circuits induced by \texttt{while}-loops, scaling beyond
      existing tools~\cite{chen2025autoq,wang2021equivalence}.
\end{enumerate}

\subsection{Related Work}

\textbf{\emph{Quantum Programming Backends.}} A rich ecosystem of quantum programming frameworks and simulators has emerged in recent years, including general-purpose stacks such as Qiskit~\cite{javadi2024quantum}, Cirq~\cite{cirq}, ProjectQ~\cite{steiger2018projectq}, CUDA-Q~\cite{cudaq}, and Amazon Braket~\cite{gonzalez2021cloud}, as well as specialised high‑performance simulators like QuEST~\cite{jones2019quest}, Qulacs~\cite{suzuki2021qulacs}, and Intel-QS (IQS)~\cite{guerreschi2020intel}. However, there remains a clear gap between the expressiveness of modern quantum programming models and the capabilities of current execution backends. On the language side, Qiskit has recently enriched its model with high-level classical control constructs such as \texttt{if\_else}, \texttt{for\_loop}, and in particular \texttt{while\_loop}, enabling natural encodings of measurement‑controlled iteration patterns. On the backend side, however, mainstream backends effectively lack native support for \texttt{while}-loops: Qiskit-Aer cannot currently execute \texttt{QuantumCircuit.while\_loop}, and Qiskit’s cloud hardware backends likewise report \texttt{while}-loops as unsupported. \emph{QSeqSim} is designed to fill precisely this gap. Starting from \texttt{QuantumCircuit} objects, it translates circuits to OpenQASM~3, assigns the resulting programs a sequential-circuit semantics, and uses a BDD-based engine to execute \texttt{while}-loop programs at a scale that makes it suitable as an efficient backend.


\paragraph{\textbf{Simulation Techniques for Quantum Circuits.}} There is also extensive work in the formal methods community on the simulation of quantum circuits~\cite{zulehner2018advanced,tsai2021bit,chen2025sliqsim,wang2025feynmandd,kissinger2019pyzx,chen2023autoq,chen2023automata,mei2024simulating}. 
However, these approaches remain essentially confined to combinational and very
limited dynamic settings; they do not extend their models to \emph{sequential
quantum circuits} with explicit state feedback, and thus cannot support the
Qiskit \texttt{while}-loop construct.
\emph{QSeqSim} builds directly on the Boolean framework of~\cite{tsai2021bit}, and
further integrates weighted model counting for probability queries, while also
extending the framework with new symbolic operators for state composition and
state retention. This yields a unified BDD-based engine that can simulate,
within a single Qiskit program, combinational, dynamic, and sequential
fragments and thereby natively support \texttt{while}-loop induced sequential
behaviour.

\section{Background}
\label{sec:background}
This section provides the necessary background on Qiskit, OpenQASM~3, and SQCs that underpins our semantics and implementation.

Figure~\ref{fig:background-pipeline} structures Qiskit \texttt{while}-loop programs into a three-stage pipeline that encompasses the programming language layer, intermediate representation layer, and circuit layer. It also maps each quantum stage to a well-known classical counterpart, allowing the quantum components to be understood through direct comparison with their classical equivalents.
\begin{figure}[h]
  \centering
  \includegraphics[width=\textwidth]{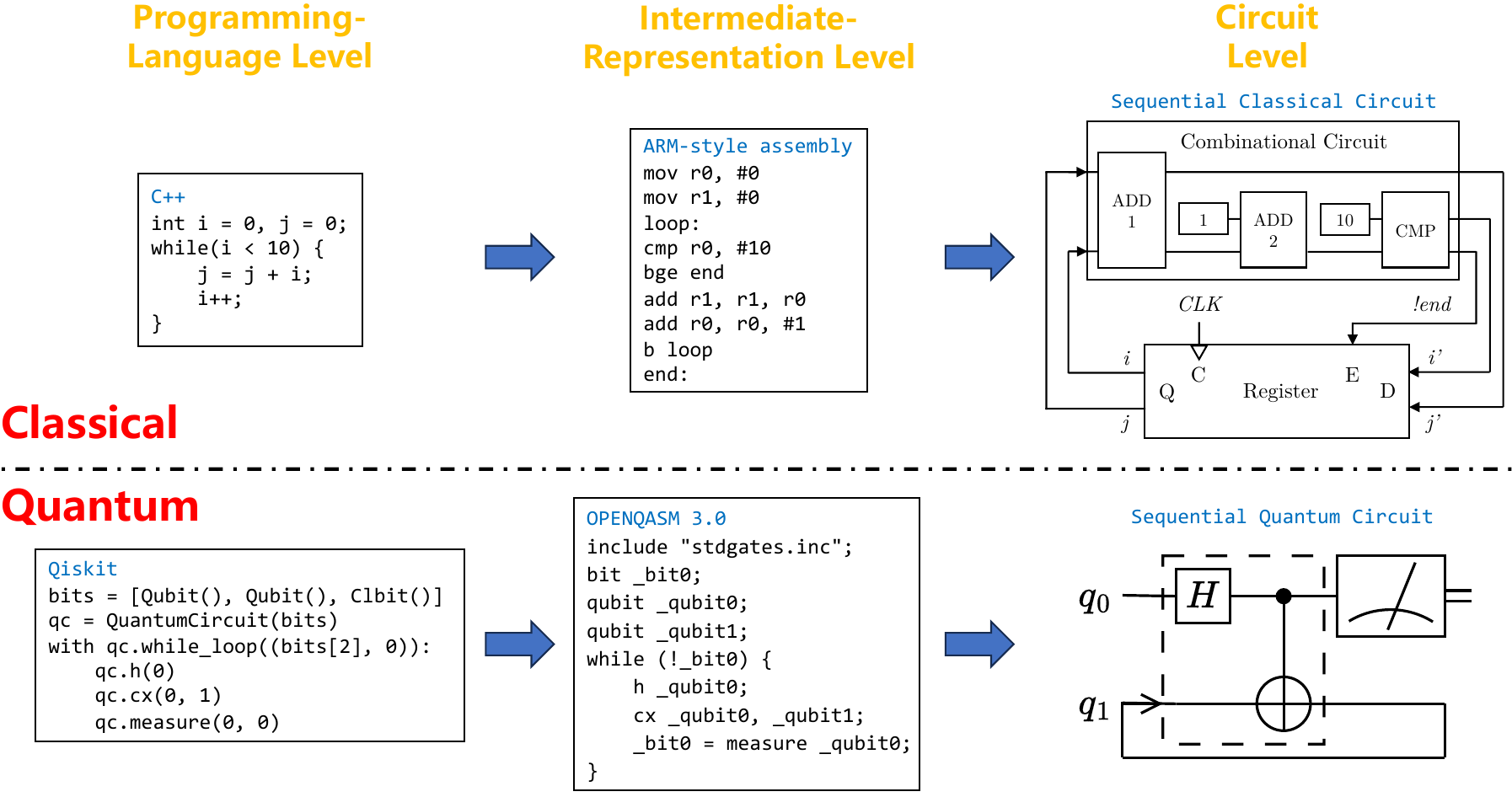}
  \caption{Classical–quantum analogy from programs to IRs to circuits.}
  \label{fig:background-pipeline}
\end{figure}

\paragraph{\textbf{Programming-Language Level.}}
In the top left of Figure~\ref{fig:background-pipeline}, a C++ \texttt{while}
loop over integers $(i,j)$ is the programmer-facing description of an iterative
computation.
In the bottom left, the Qiskit code shows a concrete quantum analogue: a RUS implementation of the $X$ gate.
Here, the \texttt{QuantumCircuit.while\_loop} construct repeatedly invokes a
small quantum subroutine and measures qubit $q_0$ to obtain a classical outcome;
the loop terminates only when this measurement result is $1$, at which point
the target qubit $q_1$ has effectively been subjected to an $X$ operation.
This example illustrates how Qiskit \texttt{while}-loops are used in practice to
express measurement-controlled iteration patterns.

\paragraph{\textbf{Intermediate-Representation (IR) Level.}}
The C++ \texttt{while} loop is then compiled to an ARM-style assembly fragment
(top middle of Figure~\ref{fig:background-pipeline}).
This assembly is a control-flow oriented IR that is easier to analyse and
further compile than the original C++. Similarly, Qiskit compiles the Python-level program to an \emph{internal} IR
represented by a \texttt{QuantumCircuit} object.
This object records the gates, measurements, and classical control structure in
a uniform form.
From there, Qiskit can export an \emph{external} IR in the form of OpenQASM~3
code~\cite{openqasm_live_spec,cross2022openqasm3} (bottom middle of Figure~\ref{fig:background-pipeline}): a quantum assembly language with a formally defined syntax
and semantics. 
In this work we treat OpenQASM~3, in the restricted fragment generated by Qiskit, as the semantic interface for \texttt{QuantumCircuit}.

\paragraph{\textbf{Circuit Level.}}
On the classical side (top right of Figure~\ref{fig:background-pipeline}), the
ARM-style assembly fragment is implemented as a \emph{sequential classical
circuit}: a combinational logic block composed of standard components such as
adders (\textsc{ADD}) and a comparator (\textsc{CMP}),
together with a register and feedback wires. On the quantum side (bottom right of Figure~\ref{fig:background-pipeline}), we
obtain the analogous notion of a \emph{sequential quantum circuit}.
The dashed box is the loop body: in this example it is a fixed subcircuit
consisting of a Hadamard on $q_0$ followed by a controlled-$X$ from $q_0$ to
$q_1$, and one pass through this box corresponds to a single loop iteration.
After the body, $q_0$ is measured and the outcome is used as the loop guard:
a result of $0$ triggers another iteration, whereas a result of $1$ terminates
the loop, exactly mirroring the role of the comparator in the sequential
classical circuit.
The wire for $q_1$ passes through the body and feeds back to its input,
indicating that the quantum state on $q_1$ is preserved and updated across
iterations, in direct analogy with the classical register that stores $i$ and
$j$.
\section{Implementation Challenges}
\label{sec:implementation-challenges}
Filling the current gap of having no Qiskit-native tool that can simulate
\texttt{while}-loop programs raises three main specific challenges.
\paragraph{\textbf{Challenge 1}: Matching Qiskit} \texttt{while}\emph{-loops with sequential semantics.}
From the circuit point of view, a \texttt{while}-loop induces sequential
behaviour: quantum state is fed back from one iteration to the next, and the
guard is evaluated from intermediate measurement outcomes.
The sequential quantum circuit (SQC) model of Wang et al.~\cite{wang2021equivalence}
captures this by iterating a fixed block over \emph{external} and \emph{internal}
qubits, with only the external ones measured each round.
Realistic Qiskit \texttt{while}-loops, however, both exceed and restrict this
abstraction: their loop bodies may contain mid-circuit measurements and nested
\texttt{if}/\texttt{else} branches not covered by the original SQC definition,
yet they arise from concrete \texttt{QuantumCircuit} programs and do not exploit
arbitrary per-iteration external inputs.

\noindent\emph{\textbf{Our approach.}}
We adopt a Qiskit-oriented generalisation of SQCs with an explicit split between
internal and external qubits and treat each \texttt{while}-loop body as a fixed
dynamic subcircuit with mid-circuit measurements and classical control, executed
unchanged in every iteration.
This generalised SQC view is realised concretely by the CQC/DQC/SQC
decomposition of the \emph{Parser} and the small-step operational semantics implemented
by the \emph{Simulator} (Section~\ref{sec:tool-architecture}, the \emph{Parser} and
Algorithm~\ref{alg:bddsim}).
\paragraph{\textbf{Challenge 2}: Symbolic simulation of sequential quantum circuits.}
Even with a suitable semantics and IR, simulating \texttt{while}-loops with
internal measurements and branches remains difficult.
Sequential circuits involve feedback across iterations, entangled internal
state, and dynamic bodies with nested \texttt{if}/\texttt{else} depending on
mid-circuit outcomes.
Naive statevector simulation scales poorly in both qubits and iterations, and
existing decision-diagram-based techniques mostly target combinational or, at
best, non-feedback dynamic circuits; they do not extend their models to general
sequential behaviour with explicit state feedback.

\noindent\emph{\textbf{Our approach.}}
Building on the BDD-based Boolean encoding of \cite{tsai2021bit}, we extend the
symbolic model from purely combinational circuits to general SQCs by introducing
operators for state composition and state retention, and by supporting external
inputs and internal state feedback across iterations.
The resulting SQC engine is implemented in the \emph{Kernel} component
\texttt{BDDSeqSim}, whose structure is detailed in
Section~\ref{sec:tool-architecture} (Kernel and Algorithm~\ref{alg:seqsim}).
\paragraph{\textbf{Challenge 3}: Measurement probabilities at scale.}
{\sloppy
Symbolic simulation of \texttt{while}-loop programs requires repeated evaluation
of measurement probabilities, especially for mid-circuit measurements and for
loop guards, often over large yet structured decision diagrams.
In existing BDD-based simulation techniques~\cite{tsai2021bit}, such
probabilities are obtained via a hyper-function construction over the decision
diagrams; however, for circuits with large numbers of qubits and deep structure
this approach becomes a major bottleneck and can dominate the overall runtime,
even when the underlying BDDs remain compact.
This raises a specific scalability challenge: how to obtain exact measurement
probabilities while preserving and exploiting the sharing and sparsity
structure of the BDDs generated by large quantum circuits.
\par}

\noindent\emph{\textbf{Our approach.}}
We incorporate model counting techniques~\cite{chavira2008probabilistic,mei2024simulating,mei2024equivalence} from formal methods and adopt weighted
model counting on structured and sparse BDDs as the core mechanism for
evaluating measurement probabilities. This counting-based engine implements the
\textsc{GetProb} interface used by both the \emph{Simulator} and the SQC \emph{Kernel}, as
detailed in Section~\ref{sec:tool-architecture} (Kernel, weighted model counting for
probability evaluation).

\section{Tool Architecture}
\label{sec:tool-architecture}

This section describes the architecture of \emph{QSeqSim} and explains how its
components jointly address the challenges identified in
Section~\ref{sec:implementation-challenges}.

Figure~\ref{fig:arch} presents the overall architecture of \emph{QSeqSim}.
Starting from a Qiskit program written in Python, \emph{QSeqSim} operates over
Qiskit’s internal intermediate representation (IR), namely the
\texttt{QuantumCircuit} object, and processes it through three core modules:
a \emph{Parser}, a \emph{Simulator}, and a Boolean \emph{Kernel}.
At a high level, the \emph{Parser} realises the compilation phase, whereas the
Simulator and Kernel jointly implement the symbolic execution back-end.
Together, the \emph{Parser} and Simulator instantiate the generalised SQC semantics
for Qiskit \texttt{while}-loops (Challenge~1), while the \emph{Kernel} provides the
symbolic machinery for SQCs (Challenge~2) and scalable probability computation
(Challenge~3).

The end-to-end workflow, which corresponds to Figure~\ref{fig:arch}, is as
follows.

\begin{figure}[h]
  \centering
  \includegraphics[width=\textwidth]{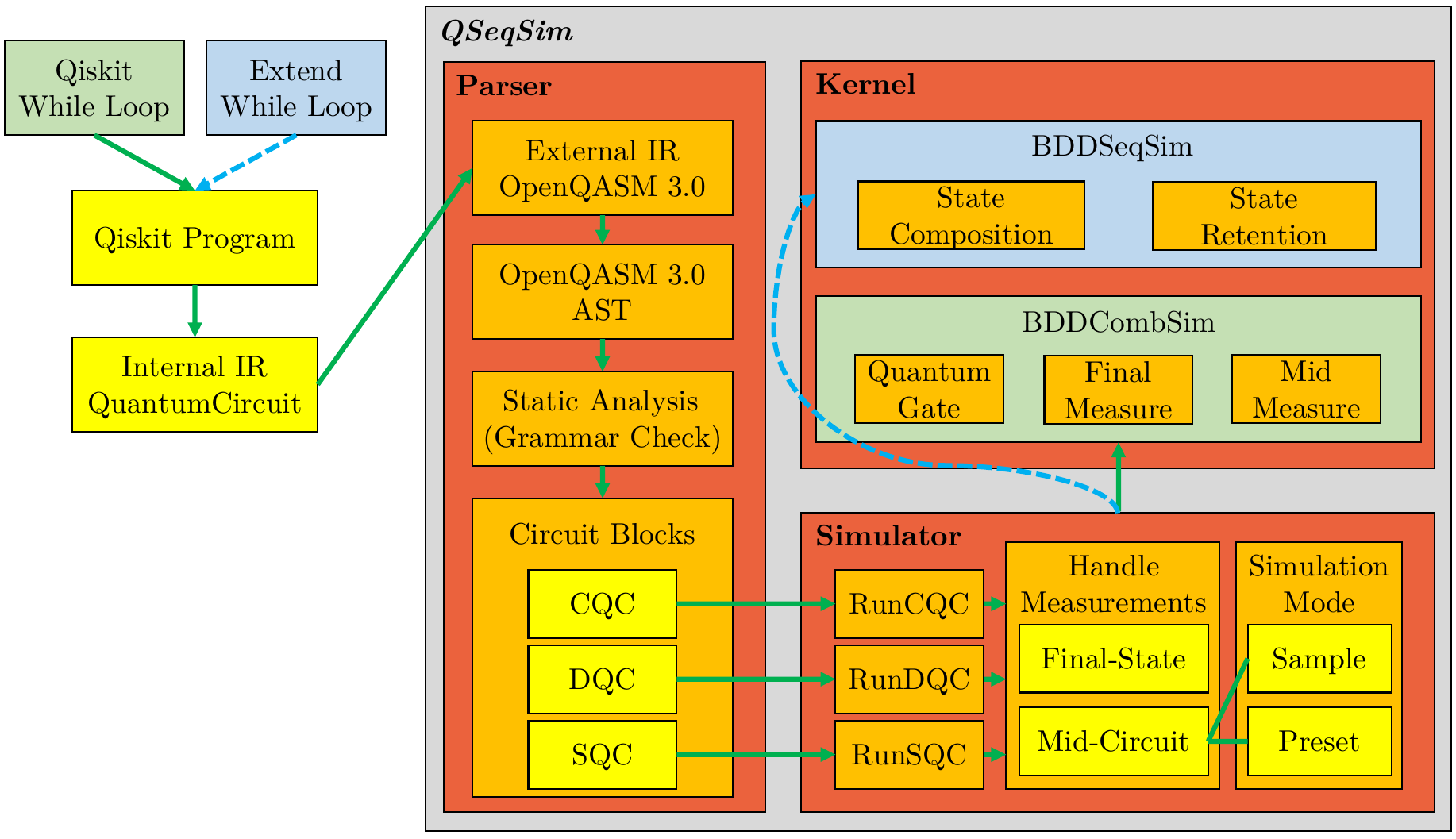}
  \caption{\emph{QSeqSim} workflow.}
  \label{fig:arch}
\end{figure}

\paragraph{\textbf{From Qiskit program to internal IR.}}
The user specifies a circuit via the Qiskit Python API.
Qiskit lowers it to a \texttt{QuantumCircuit} object, which is both its
internal IR and the entry point to \emph{QSeqSim}.

\paragraph{\textbf{Parser (compile phase).}}
Given a \texttt{QuantumCircuit}, the \emph{Parser} performs the following steps:
\begin{enumerate}[leftmargin=*]
  \item \emph{OpenQASM~3 export.}
        We invoke Qiskit’s \texttt{qasm3.dumps} routine to obtain an \emph{external}
        IR in the form of an OpenQASM~3 program.

  \item \emph{AST construction.}
        The resulting OpenQASM~3 code is parsed by the official OpenQASM~3
        parser into an abstract syntax tree (AST).

  \item \emph{Static analysis.}
        On this AST, \emph{QSeqSim} performs the following analyses:
        (i) it checks that all operations belong to the OpenQASM~3 fragment
        supported by the tool;
        (ii) it verifies that every unitary can be decomposed into the
        Clifford+T gate set implemented by the \emph{Kernel};
        (iii) it classifies each measurement as \emph{mid-circuit} or
        \emph{final}, depending on whether its outcome is used in any
        subsequent control condition; and
        (iv) it partitions the program into an ordered sequence of circuit
        blocks, each labelled as CQC, DQC or SQC. In particular, although we present the result as an ordered block sequence for
simplicity, the \emph{Parser} preserves the underlying \emph{hierarchical} block
structure induced by nested \texttt{if}/\texttt{switch}/\texttt{while} constructs.
\end{enumerate}

The resulting block sequence preserves the original program order of the Qiskit circuit and constitutes the sole input to the symbolic backend. By explicitly identifying CQC, DQC, and in particular SQC blocks for loop bodies, the \emph{Parser} provides exactly the structural interface required for the sequential semantics of Qiskit \texttt{while}-loops (Challenge~1).

\paragraph{\textbf{Simulator (operational semantics).}}
The \emph{Simulator} provides an operational interpretation of the block sequence
produced by the \emph{Parser}.
CQC blocks are treated as straight-line gate segments;
DQC blocks implement branching controlled by measurement outcomes; and
SQC blocks capture \texttt{while}-loops, which are unfolded by repeatedly
executing their bodies until the loop guard, read from classical bits,
evaluates to false.
Algorithm~\ref{alg:bddsim} formalises this behaviour in high-level
pseudo-code and instantiates the small-step semantics for mixed
combinational, dynamic, and sequential fragments, completing the realisation of
Challenge~1 on top of the \emph{Parser}’s CQC/DQC/SQC decomposition.

\begin{algorithm}[htp]
  \caption{\textsc{Simulator}$(\mathcal{B}, n, r, m)$}
  \label{alg:bddsim}
  \textbf{Input:}
      Parsed block sequence $\mathcal{B} = [B_1,\dots,B_K]$;
      number of qubits $n$;
      BDD precision parameter $r$;
      measurement mode $\textit{m} \in \{\textit{sample},\textit{preset}\}$. \\
    \textbf{Output:}
      BDD-encoded final quantum state $s$;
      path probability $p_{\text{global}}$.
  \begin{algorithmic}[1]
    \State $s \gets \textsc{BDDCombSim}(n, r)$ \Comment{initialise BDD kernel}
    \State initialise $s$ to $\lvert 0 \cdots 0 \rangle$
    \State $p_{\text{global}} \gets 1$
    \State $\textit{clbits} \gets$ empty map \Comment{classical store}
    \State \textsc{ExecuteBlocks}$(\mathcal{B})$
    \State \Return $[s, p_{\text{global}}]$
    \Statex
    \Procedure{ExecuteBlocks}{$\mathcal{B}'$}
      \For{each block $B$ in $\mathcal{B}'$ in program order}
        \If{$B$ is a CQC block}
          \For{each operation $G$ in $B$ in program order}
            \If{$G$ is a measurement $q \rightarrow c$}
              \State \textsc{HandleMeasure}$(q, c, \textsc{IsFinalMeas}(G))$
            \Else
              \State apply the Boolean gate update for $G$ on $s$
            \EndIf
          \EndFor
        \ElsIf{$B$ is a DQC block}
          \State read guard bits from $\text{clbits}$ and compute switch value $v$
          \State select branch $\mathcal{B}_{\text{sub}}$ of $B$ according to $v$
          \State \textsc{ExecuteBlocks}$(\mathcal{B}_{\text{sub}})$
        \ElsIf{$B$ is an SQC block} \Comment{while loops}
          \While{loop condition of $B$ evaluated from $\text{clbits}$ holds}
            \State let $\mathcal{B}_{\text{body}}$ be the body-block sequence of $B$
            \State \textsc{ExecuteBlocks}$(\mathcal{B}_{\text{body}})$
          \EndWhile
        \EndIf
      \EndFor
    \EndProcedure
    \Statex
    \Procedure{HandleMeasure}{$q, c, \textit{isFinal}$}
      \If{$\textit{isFinal}$}
        \State record $(q \rightarrow c)$ as a deferred final measurement
      \Else
        \State $(p_0^{\text{joint}}, p_1^{\text{joint}}) \gets \textsc{GetProb}(s, q)$ \Comment{query prob.\ of $q$}
        \State $p_0 \gets p_0^{\text{joint}} / (p_0^{\text{joint}} + p_1^{\text{joint}})$; $p_1 \gets 1 - p_0$
        \If{$\textit{m} = \textit{sample}$}
          \State sample $b \in \{0,1\}$ with $\Pr[b = 0] = p_0$ and $\Pr[b = 1] = p_1$
        \ElsIf{$\textit{m} = \textit{preset}$}
          \State $b \gets$ next preset value for classical bit $c$
        \EndIf
        \State $p_{\text{global}} \gets p_{\text{global}} \times (b = 0\  ?\  p_0 : p_1)$
        \State $s \gets \textsc{MidMeasure}(s, q, b)$ \Comment{symbolic collapse on $q$}
        \State $\textit{clbits}[c] \gets b$
      \EndIf
    \EndProcedure
  \end{algorithmic}
\end{algorithm}

\noindent\emph{Interfaces used by the Simulator.}
Algorithm~\ref{alg:bddsim} relies on a small set of primitive operations
provided by the BDD-based Kernel and the \emph{Parser}:
\begin{itemize}[leftmargin=*]
  \item $\textsc{BDDCombSim}(n,r)$:
    create an empty BDD-based symbolic simulator for $n$ qubits with
    precision parameter $r$, i.e., the number of bit slices used to represent amplitudes in the encoding of~\cite{tsai2021bit}.
  \item $\textsc{IsFinalMeas}(G)$:
    return \emph{true} iff measurement $G$ is marked as \textit{final} by the
    OpenQASM~3 static analysis in the \emph{Parser}.
  \item $\textsc{GetProb}(s,q)$:
    for a BDD-encoded state $s$, return the unnormalised probabilities
    $(p^{\text{joint}}_0, p^{\text{joint}}_1)$ of outcomes $0$ and $1$
    on qubit $q$ (computed via weighted model counting in the \emph{Kernel}; Challenge~3).
  \item $\textsc{MidMeasure}(s,q,b)$:
    perform a symbolic mid-circuit measurement of qubit $q$ with outcome
    $b \in \{0,1\}$ on state $s$.
\end{itemize}

\paragraph{\textbf{Kernel (Boolean engine).}}
All quantum-state manipulation is delegated to the \emph{Kernel}, which is
implemented on top of BDDs.
The \emph{Kernel} provides the low-level symbolic operators used by the \emph{Simulator} and
is organised into two components, addressing Challenges~2 and~3.

\begin{itemize}[leftmargin=*]
  \item \texttt{BDDCombSim}, which implements Boolean gate-update rules for
        combinational circuits together with a mid-circuit measurement
        operator, following the BDD techniques of
        Tsai et al.~\cite{tsai2021bit}.  Combined with 
        CQC/DQC decomposition, this suffices to handle all combinational and
        dynamic fragments, as well as the mid-circuit and final measurements
        that arise in Qiskit programs.

  \item \texttt{BDDSeqSim}, which extends the \emph{Kernel} to full sequential quantum
        circuits in the sense of Wang et al.~\cite{wang2021equivalence}.
        Beyond Qiskit’s current \texttt{while\_loop} semantics, \texttt{BDDSeqSim}
        also supports an \emph{extended} \texttt{while}-loop style, where each
        iteration injects a fresh external input state into a fixed SQC body, e.g.:
\begin{lstlisting}[language=MyPython,basicstyle=\ttfamily\small,frame=single]
q_internal = ...    # internal qubits, persist across iterations
c = 0               # loop guard (classical bit)
while c == 0: 
    p = new_qubit()         # fresh external input with arbitrary state
    U(p, q_internal, ...)   # body unitary using p and internal qubits
    c = measure(p)          # update loop flag
\end{lstlisting}
        To simulate such general SQCs, \texttt{BDDSeqSim} introduces two
        dedicated operators: \emph{state composition}, which tensors the
        external input state with the current internal state, and
        \emph{state retention}, which discards the measured external qubits
        while preserving the updated internal state (Challenge~2).
        Their pseudo-code is given in Algorithm~\ref{alg:seqsim}.
\end{itemize}
\noindent\emph{Remark (internal vs.\ external qubits).}
For an SQC induced by a \texttt{while}-loop, \emph{internal} qubits are the
iteration-to-iteration state that is retained and fed back, whereas
\emph{external} qubits are iteration-scoped qubits that are measured and then
discarded.
\begin{algorithm}[H]  \caption{\textsc{BDDSeqSim}$(\ket{\psi_{\text{in}}}, \mathcal{C}, \mathcal{I}, \mathcal{M})$}
  \label{alg:seqsim}
  \textbf{Input:}
      Initial internal quantum state $\ket{\psi_{\text{in}}} = (k, \{F_x^i\})$ (notation as in~\cite{tsai2021bit});
      an SQC $\mathcal{C}$ whose body is a DQC $\mathcal{C}'$; a pair of sequences (or iterators) $\mathcal{I}$ and $\mathcal{M}$ providing, for each iteration, the external input state and the corresponding external measurement outcomes (given in \emph{preset} mode or generated on the fly in \emph{sample} mode).
      \\
    \textbf{Output:}
      Final internal state $(\hat{k}, \{\hat{F}_x^i\})$; path probability $p_{\text{global}}$ of $\mathcal{M}$.
  \begin{algorithmic}[1]
    
    \State $p_{\text{global}} \gets 1$ \Comment{accumulated probability of observing $\mathcal{M}$}
    \State $(k_{\text{in}}, \{F^{i}_{x,\text{in}}\}) \gets (k, \{F_x^i\})$ \Comment{initialise internal state}
    \For{each external input state $(0, \{F^{i}_{x,\text{ex}}\}) \in \mathcal{I}$}
      \State $(k_{\text{total}}, \{F^{i}_{x,\text{total}}\}) \gets (k_{\text{in}}, \{F^{0}_{d,\text{ex}} \land F^{i}_{x,\text{in}}\})$ 
        \Comment{state composition}
      \State apply $\mathcal{C}'$ to $(k_{\text{total}}, \{F^{i}_{x,\text{total}}\})$
        using Boolean gate transformations~\cite{tsai2021bit}
        and the \textsc{HandleMeasure} procedure of Algorithm~\ref{alg:bddsim} 
      \State $M[q_0],\ldots,M[q_{m-1}] \gets \mathcal{M}$\Comment{get the corresponding measurement outcomes}
      \State Construct assignment $\widetilde{q}_0\ldots\widetilde{q}_{m-1}$ based on $M[q_0],\ldots,M[q_{m-1}]$.
      \State $(k_\text{in}, \{F_{x,\text{in}}^i\})
    \gets \left(k_\text{total}, \left\{F_{x,\text{total}}^i \Big|_{\widetilde{q}_0 \dots \widetilde{q}_{m-1}}\right\}\right)$  \Comment{state retention}
      \State compute probability $p$ of observing $M[q_0],\ldots,M[q_{m-1}]$ 
             via weighted model counting (see \emph{Weighted model counting for probability evaluation})
      \State $p_{\text{global}} \gets p_{\text{global}}\cdot p$ 
    \EndFor
    \State $(\hat{k}, \{\hat{F}_x^i\}) \gets (k_{\text{in}}, \{F^{i}_{x,\text{in}}\})$
    \State \Return $\bigl[(\hat{k}, \{\hat{F}_x^i\}),\, p_{\text{global}}\bigr]$
  \end{algorithmic}
\end{algorithm}
\noindent\emph{Correctness.}
The state-composition and retention updates in Algorithm~\ref{alg:seqsim} realise, inside the Boolean amplitude encoding of Tsai et al.~\cite{tsai2021bit}, the tensor product of a computational-basis external input with the current internal state and the conditioning on the prescribed external measurement outcomes, respectively.
Appendix~\ref{app:seq-operators} gives an overview; Appendix~\ref{app:composition} proves the composition lemma, Appendix~\ref{app:mid} states and proves the mid-circuit measurement rule (with a multi-qubit corollary), Appendix~\ref{app:retention} formalises state retention (partial trace), and Appendix~\ref{app:rus-example} walks through one repeat-until-success iteration.

\noindent\emph{Remark (iterators for while loops).}
The \texttt{for}-loop in Algorithm~\ref{alg:seqsim} is conceptual: in the implementation,
the SQC is unfolded by repeatedly consuming one element from $\mathcal{M}$ (and, if needed, $\mathcal{I}$) \emph{while the loop guard holds}.
In \emph{sample} mode, $\mathcal{M}$ is an on-demand iterator whose next outcome is sampled when the guard is evaluated, and it stops producing values once the guard becomes false.
In \emph{preset} mode, the user provides a finite prefix of $\mathcal{M}$ (and $\mathcal{I}$), which fixes a bounded unrolling of the \texttt{while}-loop.

\paragraph{Weighted model counting for probability evaluation (Challenge~3).}
A central task of the \emph{Kernel} is to evaluate measurement probabilities on
BDD-encoded quantum states. Instead of using the hyper-function construction of \cite{tsai2021bit},
\emph{QSeqSim} adopts a model counting view: probabilities are obtained via
weighted model counting over structured and sparse BDDs representing the
amplitude polynomials (see~\cite{chavira2008probabilistic}).
Because BDDs compactly encode large families of basis states with shared
structure, weighted model counting can aggregate their contributions in time
proportional to the size of the BDD representation (up to caching and arithmetic
costs), avoiding explicit enumeration of exponentially many assignments. This
does not imply polynomial-time counting in the worst case with respect to the
original circuit; the gain comes from a compact BDD structure.
This probability computation is tightly integrated into
$\textsc{GetProb}$ and into the SQC engine (Algorithm~\ref{alg:seqsim}), and
forms the main algorithmic ingredient in addressing the scalability issues
highlighted in Challenge~3.
\section{Functionalities}
\label{sec:functionality}
We summarise here the main functionalities offered to users by \emph{QSeqSim}.
\subsection{Simulation of Qiskit While Loop and Its Extensions}
\emph{QSeqSim} can be used as a backend to simulate a broad class of Qiskit \texttt{while}-loops, including patterns that are not yet natively supported by current Qiskit.
\begin{itemize}
  \item \textbf{Qiskit \texttt{while\_loop} programs.}
  Users can directly run Qiskit circuits containing \texttt{while\_loop}
  constructs (possibly mixed with \texttt{if}/\texttt{else} and \texttt{for}).
  \emph{QSeqSim} executes the loop semantics and returns the resulting quantum
  state together with the probability of a chosen measurement pattern.

  \item \textbf{Extended while loops.}
  Beyond standard Qiskit \texttt{while}-loop programs, users can also specify SQCs that admit an arbitrary external input state at each iteration. \emph{QSeqSim} simulates such extended \texttt{while}-loops via \texttt{BDDSeqSim} and allows inspection of the state after any iteration.

  \item \textbf{General Qiskit programs with arbitrary control flow.}
  Qiskit programs with arbitrary combinations and nesting of \texttt{if}/\texttt{else},
  \texttt{switch}, \texttt{for}, and \texttt{while} constructs are supported.
\end{itemize}

In all cases, users may either let \emph{QSeqSim} randomly sample measurement
outcomes (\emph{sample} mode), or specify a concrete outcome pattern
(\emph{preset} mode) to fix a particular execution path.
\subsection{Analysis of Quantum State Reachability}

In formal methods, state reachability, namely deciding whether a given
configuration can be reached and with what probability, is a central problem,
from model checking of Markov chains and other probabilistic systems to the
verification of (quantum) while loops~\cite{akshay2015reachability,agrawal2015approximate,beauquier2002logic,guan2024measurement,gay2005probabilistic,feng2013model,ying2021model,chen2025autoq}.

\emph{QSeqSim} supports such (bounded) \emph{quantum state reachability} queries for
while loops. A program point is fixed
(for example, ``after the $k$-th loop iteration'') together with a target
quantum state or basis configuration on some qubits. For a quantum random
walk implemented with a \texttt{while}-loop, a reachability question of common
interest is:
\begin{quote}
``After the $k$-th iteration, is it possible for the walker to be at position
$i$? If so, what is the probability of the execution paths that realise this?''
\end{quote}
To answer such queries, candidate paths are encoded via the
\emph{preset} mode (fixing the sequence of measurement outcomes
across iterations). \emph{QSeqSim} then computes, for each path, the final
quantum state and its probability, which allows users to check
reachability and quantify the likelihood of reaching the target.

\subsection{Analysis of Measurement Outcome Reachability}
\emph{QSeqSim} also supports queries about whether particular \emph{measurement
outcome patterns} can occur and with what probability, a central concern in the analysis of quantum protocols with repeated
measurements and feedback, and in studying loop termination behaviour
\cite{eisert2012quantum,wang2021equivalence,ying2010quantum}.
Given a sequence of measurement outcomes across iterations for a while loop, specified via the
\emph{preset} measurement mode, \emph{QSeqSim} returns the probability of the
corresponding execution. For a RUS circuit expressed as a \texttt{while}-loop
that repeats until a measurement returns $0$, a typical measurement outcome
reachability question is:
\begin{quote}
``What is the probability that the sequence of outcomes is \texttt{1111110}
(that is, six times $1$ followed by $0$), and the loop terminates at that
point?''
\end{quote}
By varying such queries over families of patterns (for example, “eventually a
$0$ occurs”), one can empirically study whether a RUS or feedback‑based
protocol converges or may fail to terminate with non‑zero probability.

\section{Evaluation}
\label{sec:evaluation}
We evaluate \emph{QSeqSim} along three dimensions matching its intended uses:
(i) simulation of Qiskit \texttt{while}-loops (Section~\ref{subsec:sim-while}),
(ii) quantum state reachability (Section~\ref{subsec:state-reach}),
and (iii) measurement outcome reachability (Section~\ref{subsec:meas-reach}).
All experiments were run on a MacBook Air 2023 (Apple M2, 16 GB unified memory)
with Python~3.12.
\subsection{Simulation of Qiskit While Loops}
\label{subsec:sim-while}
We consider three structured benchmarks (RUS~\cite{paetznick2013repeat}, QRW~\cite{shenvi2003quantum}, Grover~\cite{andres2022weakly,grover1996fast}) and, separately, randomly
generated \texttt{while}-loop programs.
Unless stated otherwise, we use \emph{preset} mode: a concrete sequence of
mid-circuit measurement outcomes is fixed in advance, so that
\emph{QSeqSim} symbolically executes a single predetermined path.

For the RUS and random \texttt{while}-loop benchmarks, circuits are generated
via the Qiskit Python API, exported to OpenQASM~3, and then passed to
\emph{QSeqSim} as in Section~3. In contrast, QRW and Grover rely in Qiskit on
custom Python functions that repeatedly instantiate large static subcircuits
(shift, oracle, diffusion), incurring significant OpenQASM parsing and
construction overhead. To isolate the cost of symbolic simulation, we therefore
bypass Qiskit for QRW and Grover and directly invoke the \texttt{BDDSeqSim}
kernel on hand-specified sequential circuits, using the same BDD-based engine.
\paragraph{Repeat-until-success (RUS).}
We use the six two-qubit RUS circuits from Paetznick and Svore~\cite[Figures~7--10]{paetznick2013repeat}, where a common RUS pattern is instantiated with different
single-qubit unitaries $U$ and corresponding ancilla measurement rules. We implement these six RUS variants (\textsf{RUS-1}
to \textsf{RUS-6}) in the Qiskit API and invoke \emph{QSeqSim} in \emph{preset} mode. In each case the ancilla is measured every iteration and
the loop terminates on a designated “success’’ outcome: for \textsf{RUS-1}--\textsf{RUS-4}
success is 0, while for \textsf{RUS-5} and \textsf{RUS-6} success
is 1. Accordingly, we fix the ancilla outcome pattern to
$0^{k-1}1$ for \textsf{RUS-1}--\textsf{RUS-4} and to $1^{k-1}0$ for
\textsf{RUS-5}/\textsf{RUS-6}, enforcing exactly $k$ iterations.

Table~\ref{tab:sim-while-rus} reports a representative configuration for six
different RUS implementations $U$.
The number of qubits and per-iteration gate count remain tiny (two qubits,
$7$--$37$ gates), so runtimes mainly reflect the symbolic cost of unfolding the
loop. For short loops (10 iterations), all variants complete in under $1$s.
For 100 iterations, however, runtimes span nearly three orders of magnitude:
from below 1.5s for \textsf{RUS-1}/\textsf{RUS-2} to around 90--108s for \textsf{RUS-5}/\textsf{RUS-6}.
This variability shows that \emph{QSeqSim} is sensitive not only to loop depth
but also to the specific unitary $U$, which affects BDD structure and sharing.
\begin{table}[htp]
\centering
\caption{Performance of \emph{QSeqSim} on Qiskit RUS \texttt{while}-loops (\emph{preset} mode).}
\label{tab:sim-while-rus}
\begin{tabular}{|c|c|c|c|c|c|}
\hline
Benchmark & Implementation $U$ & \#Qubits & \#Gates & \#Iterations & Time (s) \\
\hline
\multirow{2}{*}{\textsf{RUS-1}}
  & \multirow{2}{*}{\scalebox{0.8}{$\displaystyle\frac{2X+\sqrt{2}Y+Z}{\sqrt{7}}$}}
  & \multirow{2}{*}{2}
  & \multirow{2}{*}{7} & 10   & 0.300    \\
\cline{5-6}
  &  &  &  & 100  & 0.960   \\
\hline
\multirow{2}{*}{\textsf{RUS-2}}
  & \multirow{2}{*}{\scalebox{0.8}{$\displaystyle\frac{I+2i\sqrt{2}X}{\sqrt{3}}$}}
  & \multirow{2}{*}{2}
  & \multirow{2}{*}{13} & 10   & 0.673    \\
\cline{5-6}
  &  &  &  & 100  & 1.427   \\
\hline
\multirow{2}{*}{\textsf{RUS-3}}
  & \multirow{2}{*}{\scalebox{0.8}{$\displaystyle\frac{I+2iZ}{\sqrt{5}}$}}
  & \multirow{2}{*}{2}
  & \multirow{2}{*}{12} & 10   & 0.656    \\
\cline{5-6}
  &  &  &  & 100  & 11.081   \\
\hline
\multirow{2}{*}{\textsf{RUS-4}}
  & \multirow{2}{*}{\scalebox{0.8}{$\displaystyle\frac{3I+2iZ}{\sqrt{13}}$}}
  & \multirow{2}{*}{2}
  & \multirow{2}{*}{20} & 10   & 0.760    \\
\cline{5-6}
  &  &  &  & 100  & 20.807   \\
\hline
\multirow{2}{*}{\textsf{RUS-5}}
  & \multirow{2}{*}{\scalebox{0.8}{$\displaystyle\frac{4I+iZ}{\sqrt{17}}$}}
  & \multirow{2}{*}{2}
  & \multirow{2}{*}{37} & 10   & 0.963    \\
\cline{5-6}
  &  &  &  & 100  & 90.027   \\
\hline
\multirow{2}{*}{\textsf{RUS-6}}
  & \multirow{2}{*}{\scalebox{0.8}{$\displaystyle\frac{5I+2iZ}{\sqrt{29}}$}}
  & \multirow{2}{*}{2}
  & \multirow{2}{*}{33} & 10   & 0.994    \\
\cline{5-6}
  &  &  &  & 100  & 108.148  \\
\hline
\end{tabular}
\end{table}
\paragraph{Quantum random walk (QRW).}
For QRW, we work directly with \texttt{BDDSeqSim}, following the experimental
setup of Wang et al.~\cite[Figure~3]{wang2021equivalence}. Each iteration applies
a Hadamard on the coin, a coin-controlled shift on an $\ell$-qubit position
register, and a multi-controlled $X$ on a single flag qubit that flips when a
fixed target basis state is reached, namely the internal all-ones configuration
of the coin and the position register.

Table~\ref{tab:sim-while-qrw} shows runtimes as we scale
the position register from $N=16$ to $N=1024$ sites and increase the number of
iterations. For $N=16$ (16 qubits, 31 gates per iteration), \emph{QSeqSim}
comfortably reaches hundreds of iterations, but the cost grows rapidly:
going from 100 to 785 iterations increases the time from 5.3s to about 1800s.
Larger walks are limited by qubit and gate counts rather than loop depth:
for $N=1024$ (1024 qubits, 2047 gates per iteration), only up to 13 iterations
are feasible within a 30-minute timeout.

\paragraph{Grover's search.}
The Grover benchmark follows the same pattern: a fixed number of data qubits
plus a flag qubit, with each iteration consisting of an oracle phase flip on a
single marked basis state followed by a diffusion operator.
Again we emulate a \texttt{while}-loop by enforcing a specific termination
iteration via the mid-circuit measurement pattern.

Table~\ref{tab:sim-while-grover} summarises the results.
For 16 work qubits (81 gates per iteration), runtimes scale from $0.18$s at
3 iterations to around 1780s near 200 iterations.
For 128 qubits, even 15 iterations already saturate our timeout.
At 256 qubits and 1281 gates per iteration, \emph{QSeqSim} handles up to 5
iterations (about 19 minutes) before timeout (30 minutes) at 6 iterations.
These trends are consistent with QRW: BDD-based symbolic simulation can handle
substantial gate counts and deep loops, but very large multi-qubit operators
combine unfavourably with repeated iteration.
\begin{table}[htp]
    \centering
    \caption{Performance of \emph{QSeqSim} on different \texttt{while}-loops (\texttt{BDDSeqSim}).}
    \label{tab:sim-while-combined}
    \begin{minipage}[t]{0.48\textwidth}
        \centering
        \subcaption{QRW \texttt{while}-loops}
        \label{tab:sim-while-qrw}
        \begin{tabular}{|c|c|c|c|}
            \hline
            \#Qubits & \#Gates & \#Iterations & Time (s) \\ 
            \hline
            \multirow{5}{*}{16}  & \multirow{5}{*}{31}   & 3   & 0.041 \\
            \cline{3-4}
                                 &                       & 10  & 0.111 \\
            \cline{3-4}
                                 &                       & 100 & 5.303 \\
            \cline{3-4}
                                 &                       & 200 & 30.375 \\ 
            \cline{3-4}
                                 &                       & 785 & 1797.676 \\ 
            \hline
            \multirow{5}{*}{128} & \multirow{5}{*}{255}  & 3   & 0.698 \\
            \cline{3-4}
                                 &                       & 10  & 2.861 \\
            \cline{3-4}
                                 &                       & 100 & 239.485 \\
            \cline{3-4}
                                 &                       & 200 & 1118.501 \\ 
            \cline{3-4}
                                 &                       & 244 & 1792.934 \\ 
            \hline
            \multirow{5}{*}{1024}& \multirow{5}{*}{2047} & 3   & 235.568 \\
            \cline{3-4}
                                 &                       & 5   & 438.156 \\
            \cline{3-4}
                                 &                       & 8   & 867.763 \\
            \cline{3-4}
                                 &                       & 10  & 1124.445 \\ 
            \cline{3-4}
                                 &                       & 13  & 1622.304 \\ 
            \hline
        \end{tabular}
    \end{minipage}
    \hfill 
    \begin{minipage}[t]{0.48\textwidth}
        \centering
        \subcaption{Grover \texttt{while}-loops}
        \label{tab:sim-while-grover}
        \begin{tabular}{|c|c|c|c|}
            \hline
            \#Qubits & \#Gates & \#Iterations & Time (s) \\
            \hline
            \multirow{5}{*}{16}  & \multirow{5}{*}{81}   & 3   & 0.177 \\
            \cline{3-4}
                                 &                       & 10  & 2.354 \\
            \cline{3-4}
                                 &                       & 100 & 340.669 \\
            \cline{3-4}
                                 &                       & 150 & 866.182 \\ 
            \cline{3-4}
                                 &                       & 203 & 1781.141 \\ 
            \hline
            \multirow{5}{*}{128} & \multirow{5}{*}{641}  & 2   & 3.115 \\
            \cline{3-4}
                                 &                       & 3   & 16.293 \\
            \cline{3-4}
                                 &                       & 6   & 135.108 \\
            \cline{3-4}
                                 &                       & 10  & 622.699 \\ 
            \cline{3-4}
                                 &                       & 15  & 1700.021 \\ 
            \hline
            \multirow{5}{*}{256} & \multirow{5}{*}{1281} & 2   & 256.593 \\
            \cline{3-4}
                                 &                       & 3   & 527.699 \\ 
            \cline{3-4}
                                 &                       & 4   & 840.786 \\ 
            \cline{3-4}
                                 &                       & 5   & 1156.052 \\ 
            \cline{3-4}
                                 &                       & 6   & Timeout \\ 
            \hline
        \end{tabular}
    \end{minipage}
\end{table}

\paragraph{Random quantum while loops.}
To evaluate robustness under unstructured control flow, we generated random quantum circuits using the Qiskit API, which contain a \texttt{while}-loop with additional quantum gates placed outside the loop. We fixed the system size at 100 qubits and varied two key experimental parameters: computational density (defined as the total number of gates in the circuit, denoted $G$) and control complexity (defined as the total number of mid-circuit measurements, denoted $M$). These experiments were executed in \emph{sample} mode: for each mid-circuit measurement performed within the loop, \emph{QSeqSim} computes the exact outcome distribution over the current symbolic state, then samples the next execution branch, and proceeds until the loop terminates or reaches a predefined iteration bound of 1000.

As in \emph{sample} mode the loop termination time is a random variable determined by the sampled measurement outcomes, the resulting runtime distribution is typically right-skewed; we therefore repeat each configuration 50 times and report the median and interquartile range (IQR $[Q_1,Q_3]$)\footnote{The median is the 0.5 quantile of the sample distribution, and the interquartile range $[Q_1,Q_3]$ is the interval between the 0.25 and 0.75 quantiles; see~\cite{hoaglin2000understanding}.}. Table~\ref{tab:sim-while-random} shows that \emph{QSeqSim} remains tractable on
100-qubit circuits: across all
tested settings, median runtimes stay within a minute. Runtime generally grows with
circuit size, while mid-circuit measurements have a non-monotone, control-flow-dependent
effect (e.g., $G{=}200$, $M{=}10$ vs.\ $20$). As random \texttt{while}-loops may
not terminate, our generator enforces almost termination; these results thus
characterize robustness under terminating control flow.

\begin{table}[htp]
\centering
\caption{Performance of \emph{QSeqSim} on Qiskit random \texttt{while}-loops (\emph{sample} mode).}
\label{tab:sim-while-random}
  \begin{tabular}{|l|c|c|c|c|}
    \hline
    Benchmark & Qubits & Gates & Mid-Meas & Total Time (s) \\ \hline
    \textsf{rqc\_q100\_g50\_m5}   & 100 & 50  & 5  & 0.218 \scriptsize{[0.215, 0.915]} \\ \hline
    \textsf{rqc\_q100\_g50\_m10}  & 100 & 50  & 10 & 0.273 \scriptsize{[0.194, 1.158]} \\ \hline
    \textsf{rqc\_q100\_g50\_m20}  & 100 & 50  & 20 & 0.214 \scriptsize{[0.211, 0.367]} \\ \hline
    \textsf{rqc\_q100\_g100\_m5}  & 100 & 100 & 5  & 0.971 \scriptsize{[0.945, 3.053]} \\ \hline
    \textsf{rqc\_q100\_g100\_m10} & 100 & 100 & 10 & 0.949 \scriptsize{[0.942, 2.789]} \\ \hline
    \textsf{rqc\_q100\_g100\_m20} & 100 & 100 & 20 & 1.149 \scriptsize{[1.008, 6.092]} \\ \hline
    \textsf{rqc\_q100\_g200\_m5}  & 100 & 200 & 5  & 2.235 \scriptsize{[2.151, 5.910]} \\ \hline
    \textsf{rqc\_q100\_g200\_m10} & 100 & 200 & 10 & 15.978 \scriptsize{[3.056, 22.853]} \\ \hline
    \textsf{rqc\_q100\_g200\_m20} & 100 & 200 & 20 & 2.116 \scriptsize{[1.767, 10.228]} \\ \hline
\end{tabular}
\end{table}
\subsection{Analysis of Quantum State Reachability}
\label{subsec:state-reach}

We next use the QRW benchmark to study quantum state reachability for a
\texttt{while}-loop semantics. Each QRW step consists of a coin flip on a
single qubit, a controlled shift on an $\ell$-qubit position register, and a
multi-controlled $X$ that flips an external flag qubit $q_0$ precisely when the
coin and position jointly equal a fixed target basis state (here, the
all-ones internal configuration).
\paragraph{Reachability formulation.}
A reachability question of common
interest is:
\begin{quote}
\emph{``After $k$ QRW iterations, can the internal configuration (coin + position)
reach the all-ones basis state, and if so, with what exact probability?''}
\end{quote}
This reduces to the marginal state of the
flag qubit $q_0$ after $k$ steps: the all-ones state is reachable at iteration
$k$ iff $q_0$ can be in $\ket{1}$.
Under \emph{preset} mode, we fix the measurement outcomes controlling the loop guard to
$0^{k-1}1$, so that the loop continues for $k-1$ iterations and terminates at
step $k$. The branch probability of this classical pattern is then exactly the
reachability probability of the all-ones internal configuration at step $k$.
\paragraph{Results.}
Table~\ref{tab:state-reach-qrw} reports results for a fixed system size of
256 qubits and iteration counts $k = 1,\dots,10$.
For each $k$, \texttt{BDDSeqSim} symbolically executes the constrained path
$0^{k-1}1$, checks whether measuring qubit $q_0$ in the computational basis
can yield outcome $1$, and returns the exact branch probability together
with the runtime.

\begin{table}[htp]
  \centering
  \caption{Quantum state reachability of QRW \texttt{while}-loop.}
  \label{tab:state-reach-qrw}
  \begin{tabular}{|c|c|c|c|c|}
    \hline
    \#Qubits & $k$ & Reachable? & Probability & Time (s) \\
    \hline
    \multirow{10}{*}{256}
      & $1$  & Yes & 0.5 & 1.264 \\ \cline{2-5}
      & $2$  & No  & 0 & 2.489 \\ \cline{2-5}
      & $3$  & Yes & 0.125 & 3.808 \\ \cline{2-5}
      & $4$  & No  & 0 & 5.053 \\ \cline{2-5}
      & $5$  & No  & 0 & 6.399 \\ \cline{2-5}
      & $6$  & No  & 0 & 7.992 \\ \cline{2-5}
      & $7$  & Yes & 0.0078125 & 9.729 \\ \cline{2-5}
      & $8$  & No  & 0 & 11.409 \\ \cline{2-5}
      & $9$  & No  & 0 & 13.453 \\ \cline{2-5}
      & $10$ & No  & 0 & 15.563 \\
    \hline
  \end{tabular}
\end{table}

The non-zero rows reveal that the all-ones configuration is only reachable at
specific iteration counts (here, $k = 1, 3, 7$), and the associated branch
probabilities decay quickly.
\emph{QSeqSim} computes these probabilities exactly, without sampling noise,
and the runtimes grow approximately linearly with $k$ in this setting.
\subsection{Analysis of Measurement Outcome Reachability}
\label{subsec:meas-reach}

We finally evaluate how easily users can pose measurement-outcome reachability
queries on the \textsf{RUS-1} \texttt{while}-loop from
Section~\ref{subsec:sim-while}. Each loop iteration applies the fixed
two-qubit \textsf{RUS-1} subcircuit and measures an ancilla; the loop
continues while the ancilla is $1$ and terminates on the first $0$.

A typical measurement outcome reachability query is:
\begin{quote}
  \emph{``What is the probability that the ancilla outcomes are $1^k 0$, so that the loop terminates
  exactly at iteration $k+1$?''}
\end{quote}

In \emph{preset} mode, the user specifies the concrete pattern $1^k 0$;
\emph{QSeqSim} then symbolically follows this unique branch and returns:
(i) the probability of that pattern, $p_\text{path}(k)$, and
(ii) the probability that the loop has terminated by iteration $k+1$,
$p_\text{term.}(k)$.
Here $p_\text{term.}(k)$ is obtained as the cumulative probability of all
termination patterns up to iteration $k+1$, i.e.\
$p_\text{term.}(k) = \sum_{i=0}^{k} p_\text{path}(i)$.

\begin{table}[htp]
  \centering
  \caption{Measurement-outcome reachability for the \textsf{RUS-1} \texttt{while}-loop.}
  \label{tab:rus-meas-reach}
  \begin{tabular}{|c|c|c|c|c|c|c|c|c|c|}
    \hline
    $k$             & 0       & 1       & 2       & 3       & 4       & 5       & 6       & 7       & 8       \\
    \hline
    $p_\text{path}$ & 0.75000 & 0.18750 & 0.04688 & 0.01172 & 0.00293 & 0.00073 & 0.00018 & 0.00005 & 0.00001 \\
    \hline
    $p_\text{term.}$& 0.75000 & 0.93750 & 0.98438 & 0.99609 & 0.99902 & 0.99976 & 0.99994 & 0.99998 & 1.00000 \\
    \hline
    Time (s)        & 0.02607 & 0.05073 & 0.04982 & 0.04946 & 0.07638 & 0.11201 & 0.15824 & 0.21086 & 0.29526 \\
    \hline
  \end{tabular}
\end{table}

Table~\ref{tab:rus-meas-reach} shows the outputs for $k = 0,\dots,8$.
The per-pattern probabilities $p_\text{path}$ quickly become very small as $k$
increases, while termination is already almost certain after a few iterations.
All queries complete in well under one second, allowing interactive exploration
of rare measurement patterns.

\section{Conclusion}
\label{sec:conclusion}

We have presented \emph{QSeqSim}, a Qiskit-integrated symbolic backend that directly executes \texttt{while}-loop programs by interpreting them as sequential quantum circuits with explicit state feedback across iterations. Building on a BDD-based Boolean representation of quantum states and operations, \emph{QSeqSim} provides symbolic operators tailored to this sequential setting and thus enables efficient exploration of loop-induced behaviours. Our experiments on RUS schemes, quantum random walks, and Grover's search on commodity hardware show that \emph{QSeqSim} scales to substantial \texttt{while}-induced sequential circuits; in particular, in the quantum random walk benchmark we successfully simulate circuits with over 1000 qubits for more than 10 loop iterations, and we support both quantum state reachability and measurement outcome reachability queries for Qiskit \texttt{while}-loop programs. All experiments in this work are conducted on a PC. As future work, we plan to deploy \emph{QSeqSim} on server-class machines and exploit parallelism to support larger-scale simulations and better serve subsequent formal verification of quantum programs with \texttt{while}-loops.

\begin{credits}
\subsubsection{\ackname} We sincerely thank the anonymous reviewers for their valuable comments and constructive suggestions. This work was partially supported by the Innovation Program for Quantum Science and Technology (Grant No. 2024ZD0300500), the Youth Innovation Promotion Association, Chinese Academy of Sciences (Grant No. 2023116), the National Natural Science Foundation of China (Grant No. 62402485), the Young Elite Scientists Sponsorship Program, China Association for Science and Technology, the CCF-QuantumCtek Superconducting Quantum Computing Special Cooperation Program (Grant No. CCF-QC2025007), and the International Partnership Program of the Chinese Academy of Sciences (Grant No. 096GJHZ2025013FN).
\end{credits}

\bibliographystyle{splncs04}
\bibliography{ref}

\appendix

\begingroup
\setlength{\parskip}{0.55\baselineskip plus 0.12\baselineskip minus 0.08\baselineskip}

\section{Overview}
\label{app:seq-operators}

This appendix records lemmas on the $(k,\{F_x^i\})$ encoding that justify the \emph{state composition} and \emph{state retention} steps in Algorithm~\ref{alg:seqsim}, together with the mid-circuit measurement rule used inside the dynamic body $\mathcal{C}'$ and in \textsc{HandleMeasure} (Algorithm~\ref{alg:bddsim}).

\paragraph{Quantum registers.}
Fix $n=m+l$ for one SQC iteration and partition the $n$ qubit lines:
\[
  Q[n]=Q_{\mathrm{ex}}[m]\cup Q_{\mathrm{in}}[l],
  \qquad
  Q_{\mathrm{ex}}[m]\cap Q_{\mathrm{in}}[l]=\emptyset,
\]
where $Q_{\mathrm{ex}}[m]=\{q_0,\ldots,q_{m-1}\}$ and $Q_{\mathrm{in}}[l]=\{q_m,\ldots,q_{n-1}\}$ are the external (input and measured) and internal (feedback) registers.
Set $\mathcal{H}_{\mathrm{ex}}=\bigotimes_{j=0}^{m-1}\mathcal{H}_{q_j}$ and $\mathcal{H}_{\mathrm{in}}=\bigotimes_{j=m}^{n-1}\mathcal{H}_{q_j}$; the joint Hilbert space is $\mathcal{H}_{\mathrm{ex}}\otimes\mathcal{H}_{\mathrm{in}}$.

\begin{figure}[H]
\centering
\includegraphics[width=0.85\linewidth]{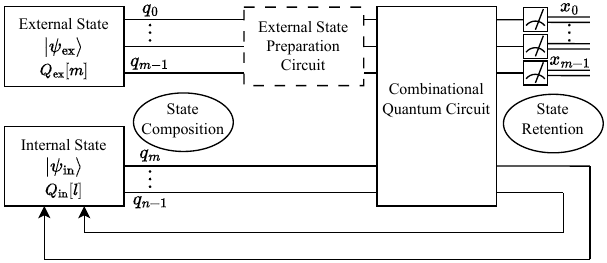}
\caption{Circuit-level view of one SQC iteration (composition, dynamic body with measurement on the external register, retention).}
\label{fig:app-sqc}
\end{figure}

All statements are phrased in the $(k,\{F_x^i\})$ encoding of Tsai et al.~\cite{tsai2021bit} and use the register partition above together with the external/internal naming of Section~\ref{sec:tool-architecture}.
We focus on the lemmas that interface directly with \emph{QSeqSim}; longer worked examples (for instance, stabiliser-style error syndromes with repeated mid-circuit measurements) follow the same disjunctive-normal-form bookkeeping used below.

\paragraph{Soundness of one SQC iteration (sketch).}
In Algorithm~\ref{alg:seqsim}, the composition line matches Theorem~\ref{thm:app-tensor}; the body $\mathcal{C}'$ is simulated with the same Boolean gate and mid-circuit updates as Algorithm~\ref{alg:bddsim}, for which Tsai et al.~\cite{tsai2021bit} and Theorem~\ref{thm:app-mid} supply the local correctness of each primitive; the retention line matches Theorem~\ref{thm:app-trace} once external outcomes are fixed.
Chaining these updates for every iteration therefore preserves the intended sequential semantics in the $(k,\{F_x^i\})$ encoding.

\bigskip

\section{State composition}
\label{app:composition}

In each SQC iteration the external register is prepared in a computational basis state (possibly after a preparatory CQC on external qubits only), while the internal register may carry an arbitrary entangled state from the previous iteration.

\begin{theorem}
\label{thm:app-tensor}
Let $\ket{\psi_{\mathrm{ex}}}=(0,\{F_{x,\mathrm{ex}}^i\})$ be a computational basis state on an $m$-qubit external register and let $\ket{\psi_{\mathrm{in}}}=(k_{\mathrm{in}},\{F_{x,\mathrm{in}}^i\})$ be an arbitrary state on a disjoint $l$-qubit internal register.
Then
\[
  \ket{\psi_{\mathrm{ex}}}\otimes \ket{\psi_{\mathrm{in}}}
  = \bigl(k_{\mathrm{in}},\,\{F_{d,\mathrm{ex}}^0 \land F_{x,\mathrm{in}}^i\}\bigr).
\]
\end{theorem}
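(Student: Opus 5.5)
Recall the Tsai et al. encoding. A state is written $(k,\{F_x^i\})$ with $x\in\{a,b,c,d\}$ and $i\in\{0,\dots,r-1\}$. The amplitude of a basis string $\sigma$ is
\[
\alpha(\sigma)=\frac{1}{\sqrt{2}^{\,k}}\bigl(a_\sigma\omega^3+b_\sigma\omega^2+c_\sigma\omega+d_\sigma\bigr),
\qquad \omega=e^{i\pi/4},
\]
where each integer $x_\sigma$ is read, in $r$-bit two's complement, from the bits $F_x^{r-1}(\sigma),\dots,F_x^0(\sigma)$. The proof is a pointwise comparison of amplitudes: the two sides are equal as vectors iff they give the same amplitude on every basis string $\sigma=(\sigma_{\mathrm{ex}},\sigma_{\mathrm{in}})$ of the joint register. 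The target identity is
\[
\alpha_{\mathrm{ex}}(\sigma_{\mathrm{ex}})\,\alpha_{\mathrm{in}}(\sigma_{\mathrm{in}})=\alpha_{\mathrm{total}}(\sigma).
\]
Here $F^i_{x,\mathrm{total}}=F^0_{d,\mathrm{ex}}\land F^i_{x,\mathrm{in}}$ is read as a Boolean function of both variable sets. Since the two sets are disjoint, each conjunct depends only on its own block.

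First I pin down the external basis state. If $\ket{\psi_{\mathrm{ex}}}=\ket{s}$ for a string $s\in\{0,1\}^m$ with $k=0$, then its amplitude is $1$ at $\sigma_{\mathrm{ex}}=s$ and $0$ elsewhere. This forces $d=1$ at $s$ and $a=b=c=0$ everywhere. In the encoding, this means $F^0_{d,\mathrm{ex}}$ is the minterm of $s$. All other bits $F^i_{x,\mathrm{ex}}$ are identically $0$: the higher bits of $d$ vanish because $1$ is nonnegative, and $0$ is all zero bits. I will state this characterisation as a small preliminary observation. I will also note that any encoding of a basis state with $k=0$ must have this form, since the two's-complement encoding is unique for fixed $r$.

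Next I do the case split on $\sigma_{\mathrm{ex}}$.
\begin{itemize}
\item If $\sigma_{\mathrm{ex}}\neq s$, then $F^0_{d,\mathrm{ex}}(\sigma_{\mathrm{ex}})=0$. So every bit $F^i_{x,\mathrm{total}}(\sigma)=0$, hence all four integers vanish and $\alpha_{\mathrm{total}}(\sigma)=0$. This matches $\alpha_{\mathrm{ex}}(\sigma_{\mathrm{ex}})\,\alpha_{\mathrm{in}}(\sigma_{\mathrm{in}})=0\cdot\alpha_{\mathrm{in}}(\sigma_{\mathrm{in}})$.
\item If $\sigma_{\mathrm{ex}}=s$, the conjunction reduces to $F^i_{x,\mathrm{in}}(\sigma_{\mathrm{in}})$ for every $i,x$. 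The integers $a,b,c,d$ are therefore exactly the internal ones and, with the common exponent $k_{\mathrm{in}}$, $\alpha_{\mathrm{total}}(\sigma)=\alpha_{\mathrm{in}}(\sigma_{\mathrm{in}})=1\cdot\alpha_{\mathrm{in}}(\sigma_{\mathrm{in}})$.
\end{itemize}
The normalisation exponent is the internal one because the external contributes $\sqrt{2}^{\,0}=1$. This gives $k_{\mathrm{total}}=k_{\mathrm{in}}$.

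The main obstacle is the bookkeeping of the encoding rather than any real mathematics. Two points need care. First, the qubit ordering and variable naming must be fixed so that the tensor product corresponds to concatenating $\sigma_{\mathrm{ex}}$ and $\sigma_{\mathrm{in}}$. Second, the two's-complement sign bit $F^{r-1}$ must be handled correctly; the argument works because conjunction with $0$ yields the all-zero word, which represents $0$, and conjunction with $1$ leaves the word unchanged. I would also add a remark that the precision $r$ does not need to grow, unlike in a general tensor product, where integer multiplication could overflow. This is exactly why the statement is restricted to computational basis inputs; the hypothesis that the external register is a basis state is what makes the single minterm $F^0_{d,\mathrm{ex}}$ a $0/1$ multiplier.
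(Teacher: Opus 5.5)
Your proof is correct and is the paper's argument. The paper writes the external amplitude as a Kronecker delta $\delta_{sj}$, distributes it over the four components with common exponent $k_{\mathrm{in}}$, and observes that multiplying an $r$-bit word by $\delta_{sj}\in\{0,1\}$ is bitwise conjunction; your case split on $\sigma_{\mathrm{ex}}$ is the same step, and your remarks on the sign bit and on why $F^0_{d,\mathrm{ex}}$ is the minterm of $s$ make explicit what the paper takes for granted (though forcing $a=b=c=0$, $d=1$ at $s$ rests on $1,\omega,\omega^2,\omega^3$ being linearly independent over $\mathbb{Q}$, not merely on uniqueness of two's complement).
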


\begin{proof}
Let $m$-qubit state $\ket{\psi_{\mathrm{ex}}} = (0, \{F_{x,\mathrm{ex}}^i\})$ be the computational basis state $\ket{j}$, then the amplitude is given by $\alpha^{\mathrm{ex}}_s = \delta_{sj}$.
Here, $\delta_{sj}$ is the Kronecker delta.
For $l$-qubit state $\ket{\psi_{\mathrm{in}}}$, the amplitude is given by $\alpha^{\mathrm{in}}_t = \frac{1}{\sqrt{2}^{k_{\mathrm{in}}}}(a_{t,\mathrm{in}}\omega^3 + b_{t,\mathrm{in}}\omega^2 + c_{t,\mathrm{in}}\omega + d_{t,\mathrm{in}})$ in the normal form of Tsai et al.~\cite{tsai2021bit}.

Thus, defining $A=(\alpha^{\mathrm{ex}}_s)_{0\leq s <2^m}^T,B=(\alpha^{\mathrm{in}}_t)_{0\leq t<2^l}^T$, the amplitude of their tensor product $A\otimes B=(\beta_{s\ll l+t})_{s,t}^T$ is given by:
\begin{align*}
\beta_{s\ll l+t} &= \alpha^{\mathrm{ex}}_s \cdot \alpha^{\mathrm{in}}_t \\
&= \delta_{sj} \cdot \frac{1}{\sqrt{2}^{k_{\mathrm{in}}}}(a_{t,\mathrm{in}}\omega^3 + b_{t,\mathrm{in}}\omega^2 + c_{t,\mathrm{in}}\omega + d_{t,\mathrm{in}}) \\
&= \frac{1}{\sqrt{2}^{k_{\mathrm{in}}}}(\delta_{sj}a_{t,\mathrm{in}}\omega^3 + \delta_{sj}b_{t,\mathrm{in}}\omega^2 + \delta_{sj}c_{t,\mathrm{in}}\omega + \delta_{sj}d_{t,\mathrm{in}}).
\end{align*}

Thus, the scalar of the tensor product state $\ket{\psi_{\mathrm{ex}}} \otimes \ket{\psi_{\mathrm{in}}}$ is $k_{\mathrm{in}}$, and the components of the amplitude $\beta_{s\ll l+t}$ are given by $x_{s \ll l + t} = \delta_{sj}\, x_{t,\mathrm{in}}$.
Since $\delta_{sj}$ is either $0$ or $1$, for each bit position $i$, we have:
\[
x_{s \ll l + t}(i) = \delta_{sj} \land x_{t,\mathrm{in}}(i).
\]
This corresponds to the Boolean function:
\[
F_x^i = F_{d,\mathrm{ex}}^0 \land F_{x,\mathrm{in}}^i,
\]
where $F_{d,\mathrm{ex}}^0$ encodes the binary vector $(\delta_{0j}, \ldots, \delta_{2^m-1,j})^T$ and $F_{x,\mathrm{in}}^i$ encodes the binary vector $(x_{0,\mathrm{in}}(i), \ldots, x_{2^l-1,\mathrm{in}}(i))^T$.

Therefore, the tensor product state is represented by $(k_{\mathrm{in}}, \{F_{d,\mathrm{ex}}^0 \land F_{x,\mathrm{in}}^i\})$, which completes the proof of the theorem.
\qed
\end{proof}

\paragraph{Remark.}
Theorem~\ref{thm:app-tensor} is stated for external computational basis states $\ket{\psi_{\mathrm{ex}}}$, matching the composition line of Algorithm~\ref{alg:seqsim}.
More generally, if the intended external input is $\ket{\psi_{\mathrm{ex}}}=U\ket{e}$ for a preparatory CQC $U$ on $Q_{\mathrm{ex}}[m]$ and a computational basis state $\ket{e}$, then
\[
  \ket{\psi_{\mathrm{ex}}}\otimes\ket{\psi_{\mathrm{in}}}
  \;=\;
  (U\otimes I)\,\bigl(\ket{e}\otimes\ket{\psi_{\mathrm{in}}}\bigr),
\]
so the same semantics can be recovered symbolically by first forming $\ket{e}\otimes\ket{\psi_{\mathrm{in}}}$ via Theorem~\ref{thm:app-tensor} and then simulating $U\otimes I$ on the joint tuple with the usual Boolean gate updates~\cite{tsai2021bit}; equivalently, external initialization may be deferred until after the composition step whenever that is convenient for the symbolic tensor construction (cf.\ the dashed box in Figure~\ref{fig:app-sqc}).

\bigskip

\section{Mid-circuit measurement}
\label{app:mid}

The following single-qubit statement is the Boolean analogue of projecting onto a fixed computational outcome; it underpins \textsc{MidMeasure} in Algorithm~\ref{alg:bddsim}.

\begin{theorem}
\label{thm:app-mid}
When a quantum measurement $M$ is performed on qubit $q_t$ of a state $\ket{\psi}=(k, \{F_x^i\})$, the resulting post-measurement state is $(k, \{\hat{F_x^i}\})$, where
\[
\hat{F_x^i} = \widetilde{q}_t F_x^i \big|_{\widetilde{q}_t}, \text{ if } \widetilde{q}_t = \begin{cases}
    \overline{q_t}, & \text{if } M[q_t] = 0, \\
    q_t, & \text{if } M[q_t] = 1.
\end{cases}
\]
\end{theorem}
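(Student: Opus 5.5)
The plan follows the amplitude-level style of the proof of Theorem~\ref{thm:app-tensor}. I will work with the normal form of Tsai et al.~\cite{tsai2021bit}: each basis index $s\in\{0,1\}^n$ has amplitude $\alpha_s=\frac{1}{\sqrt{2}^{k}}(a_s\omega^3+b_s\omega^2+c_s\omega+d_s)$. Each coefficient $x_s\in\{a_s,b_s,c_s,d_s\}$ is a $r$-bit two's-complement integer, and its $i$-th bit is $F_x^i(s)$.

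\textbf{Step 1: the projective effect of $M$.} Let $M[q_t]=b$. The (unnormalised) post-measurement state is $(P_b\otimes I)\ket{\psi}$, where $P_b=\ket{b}\bra{b}$ acts on $q_t$. Its amplitudes are $\beta_s=[s_t=b]\,\alpha_s$. Exactly as in Theorem~\ref{thm:app-tensor}, multiplying by a $0/1$ Kronecker factor leaves the scalar $k$ unchanged. Each integer coefficient becomes $\hat x_s=[s_t=b]\,x_s$, and since $x_s\cdot 0=0$ has all bits zero, bitwise we get $\hat x_s(i)=[s_t=b]\wedge x_s(i)$.

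\textbf{Step 2: the Boolean form.} The indicator $[s_t=b]$ is the literal $\widetilde q_t$, namely $\overline{q_t}$ for $b=0$ and $q_t$ for $b=1$. This gives $\hat F_x^i=\widetilde q_t\wedge F_x^i$. It remains to check $\widetilde q_t\wedge F_x^i=\widetilde q_t\wedge F_x^i|_{\widetilde q_t}$, where $F|_{\widetilde q_t}$ is the cofactor obtained by setting $q_t$ to the value making $\widetilde q_t$ true. On assignments where $\widetilde q_t$ is false, both sides are $0$. On assignments where it is true, $F_x^i$ and its cofactor agree by the definition of cofactor (Shannon expansion). I use the cofactor form because the resulting BDD is a single literal node over a cofactor, which is how \textsc{MidMeasure} is implemented.

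\textbf{Step 3: normalisation, the main obstacle.} The statement keeps $k$ unchanged, so the tuple represents the \emph{unnormalised} projected vector. The honest normalised state would need the factor $1/\sqrt{p_b}$, and this is generally not of the form $\sqrt2^{-k'}$ times a ring element in $\mathbb{Z}[\omega]$. I would therefore state explicitly that, as in Algorithm~\ref{alg:bddsim}, normalisation is deferred. The probability $p_b=p_b^{\text{joint}}/(p_0^{\text{joint}}+p_1^{\text{joint}})$ is tracked separately through \textsc{GetProb} and $p_{\text{global}}$. Consequently the tuple equals the post-measurement state up to a positive scalar, i.e.\ as a ray and hence as a physical state.

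I would also remark that the construction is consistent with later measurements. Subsequent calls to \textsc{GetProb} compute joint weights from this unnormalised encoding and renormalise by their sum. This ratio is invariant under the omitted global scalar, which closes the argument.
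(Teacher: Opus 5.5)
Your proof is correct and is essentially the paper's argument. The paper expands $F_x^i$ into minterms and observes that those containing $\overline{\widetilde{q}_t}$ are killed while those containing $\widetilde{q}_t$ survive unchanged, which is exactly your identity $\widetilde{q}_t\wedge F_x^i=\widetilde{q}_t\wedge F_x^i\big|_{\widetilde{q}_t}$ read minterm by minterm. Your amplitude-level Step~1 (a $0/1$ factor zeroes every bit of a two's-complement coefficient) and your Step~3 (an unchanged $k$ means the tuple encodes the unnormalised $P_b\ket{\psi}$, with normalisation deferred to \textsc{GetProb}) make explicit two points the paper's proof leaves implicit.
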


\begin{proof}
We begin by expressing $F_x^i$ in its fully expanded disjunctive normal form, so that every term explicitly contains all Boolean variables. This expansion enables us to partition the terms of $F_x^i$ into two distinct classes:
\begin{itemize}
    \item Terms containing $\overline{\widetilde{q}_t}$: Any term that contains the literal $\overline{\widetilde{q}_t}$ will evaluate to \texttt{false} once the operation $F_x^i \big|_{\widetilde{q}_t}$ is applied. This is because, in accordance with the predetermined measurement outcome, the literal $\overline{\widetilde{q}_t}$ is replaced by its negated value, thereby invalidating the entire term.
    \item Terms containing $\widetilde{q}_t$: For every term that includes $\widetilde{q}_t$, the application of the substitution $\widetilde{q}_tF_x^i\big|_{\widetilde{q}_t}$ leaves the term unchanged. This invariance occurs because the measured qubit $q_t$ retains its original Boolean value consistent with the measurement outcome.
\end{itemize}

Thus, after applying the measurement operation across all terms, the resultant expression $\hat{F_x^i}$ retains only those terms from $F_x^i$ that include $\widetilde{q}_t$. These surviving terms are exactly those that are consistent with the outcome of the measurement $M[q_t]$. Therefore, $\hat{F_x^i}$ is equivalent to the projection of the quantum state onto the subspace corresponding to the predetermined measurement outcome. This completes the proof.
\qed
\end{proof}

\begin{corollary}
\label{cor:app-mid-multi}
When a quantum measurement $M$ is performed on quantum register $Q_{\mathrm{ex}}[m]=\{q_0,\ldots,q_{m-1}\}$ of a quantum state $\ket{\psi}=(k, \{F_x^i\})$, the resulting post-measurement state becomes $(k, \{\hat{F_x^i}\})$, where
\[
\hat{F_x^i} = \widetilde{q}_0 \ldots \widetilde{q}_{m-1} F_x^i \big|_{\widetilde{q}_0 \ldots \widetilde{q}_{m-1}}, \text{ where for } 0\leq j< m, \widetilde{q}_j = \begin{cases}
    \overline{q_j}, & \text{if } M[q_j] = 0, \\
    q_j, & \text{if } M[q_j] = 1.
\end{cases}
\]
\end{corollary}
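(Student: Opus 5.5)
The plan is to prove Corollary~\ref{cor:app-mid-multi} by induction on the number $m$ of measured qubits, applying Theorem~\ref{thm:app-mid} once per qubit. Two facts make this possible. First, measuring $q_0,\ldots,q_{m-1}$ jointly in the computational basis with outcomes $M[q_0],\ldots,M[q_{m-1}]$ is the same as applying the projector $\bigotimes_j \ket{M[q_j]}\bra{M[q_j]}$. Second, this projector factors as a product of commuting single-qubit projectors, so the joint measurement can be realised as $m$ successive single-qubit measurements in any order. Throughout, the scalar $k$ is untouched, as in Theorem~\ref{thm:app-mid}, since these operations are unnormalised projections; only the Boolean functions change.

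The base case $m=1$ is exactly Theorem~\ref{thm:app-mid}. For the inductive step, assume the claim for $q_0,\ldots,q_{j-1}$. The intermediate state is then $(k,\{G_x^i\})$ with
\[
G_x^i=\widetilde{q}_0\cdots\widetilde{q}_{j-1}\,F_x^i\big|_{\widetilde{q}_0\cdots\widetilde{q}_{j-1}} .
\]
Applying Theorem~\ref{thm:app-mid} to qubit $q_j$ yields $\widetilde{q}_j\, G_x^i|_{\widetilde{q}_j}$. Because $q_j$ is a variable distinct from $q_0,\ldots,q_{j-1}$, cofactoring with respect to $\widetilde{q}_j$ commutes with conjunction by the literals $\widetilde{q}_0,\ldots,\widetilde{q}_{j-1}$ and leaves them fixed. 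It also commutes with the earlier cofactors, since cofactors on distinct variables compose into a single cofactor on the whole cube. Hence
\[
\widetilde{q}_j\,G_x^i\big|_{\widetilde{q}_j}
=\widetilde{q}_0\cdots\widetilde{q}_j\,F_x^i\big|_{\widetilde{q}_0\cdots\widetilde{q}_j},
\]
which is the claim for $j+1$.

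The main obstacle is to justify these Boolean identities cleanly: $(\ell\wedge f)|_{c}=\ell\wedge f|_c$ when the literal $\ell$ does not mention the variable of $c$, and $(f|_{c_1})|_{c_2}=f|_{c_1c_2}$ for literals on distinct variables. I would verify both pointwise on assignments. Alternatively, I would reuse the fully expanded DNF argument from the proof of Theorem~\ref{thm:app-mid}: the result retains exactly those minterms of $F_x^i$ that agree with every $\widetilde{q}_j$. That is precisely the set of basis amplitudes kept by the joint projector, while all other amplitudes are zeroed in every bit-slice $F_x^i$.
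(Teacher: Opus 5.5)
Your proposal is correct and follows the paper's own proof, which simply applies Theorem~\ref{thm:app-mid} successively to $q_0,\ldots,q_{m-1}$. Your induction makes explicit the cofactor identities that the paper leaves implicit, namely $(\ell\wedge f)|_c=\ell\wedge f|_c$ and $(f|_{c_1})|_{c_2}=f|_{c_1c_2}$ for literals on distinct variables, together with the commutation of the single-qubit projectors.
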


\begin{proof}
This follows by applying Theorem~\ref{thm:app-mid} successively to $q_0,\ldots,q_{m-1}$.
\qed
\end{proof}

\bigskip

\section{State retention}
\label{app:retention}

\begin{theorem}
\label{thm:app-trace}
Let $\ket{\psi} = (k, \{F_x^i\})$ be a state on quantum register $Q_{\mathrm{ex}}[m]\cup Q_{\mathrm{in}}[l]$ with $Q_{\mathrm{ex}}[m]\cap Q_{\mathrm{in}}[l]=\emptyset$.
When $\ket{\psi}$ is measured by a measurement $M$ on $Q_{\mathrm{ex}}[m] = \{q_0, \dots, q_{m-1}\}$, the resulting state on $Q_{\mathrm{in}}[l]$ is $\ket{\psi_{\mathrm{in}}} = (k, \{\hat{F_x^i}\})$, where
\[
\hat{F_x^i} = F_x^i \Big|_{\widetilde{q}_0 \dots \widetilde{q}_{m-1}},\text{ where for } 0\leq j< m,\; \widetilde{q}_j = \begin{cases}
    \overline{q_j}, & \text{if } M[q_j] = 0, \\
    q_j, & \text{if } M[q_j] = 1.
\end{cases}
\]
\end{theorem}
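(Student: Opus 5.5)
The proof goes through the projective collapse of Corollary~\ref{cor:app-mid-multi}, followed by a restriction of the amplitude indices to the internal register. By Corollary~\ref{cor:app-mid-multi}, measuring $Q_{\mathrm{ex}}[m]$ with outcome string $M$ takes $\ket{\psi}=(k,\{F_x^i\})$ to the unnormalised post-measurement state $(k,\{\widetilde{q}_0\cdots\widetilde{q}_{m-1}F_x^i|_{\widetilde{q}_0\dots\widetilde{q}_{m-1}}\})$ on the full register. So it suffices to show that this joint state factorises as $\ket{e}\otimes\ket{\psi_{\mathrm{in}}}$, where $\ket{e}$ is the computational basis state on $Q_{\mathrm{ex}}[m]$ indexed by the outcome string $e=(M[q_0],\dots,M[q_{m-1}])$. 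Once this factorisation holds, the partial trace over $Q_{\mathrm{ex}}[m]$ is just $\ket{\psi_{\mathrm{in}}}$ (up to the normalisation handled separately by the weighted model count in Algorithm~\ref{alg:seqsim}).

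\textbf{Step 1: identify the product literal.} The conjunction $\widetilde{q}_0\cdots\widetilde{q}_{m-1}$ is a Boolean function of the external variables only. It is true exactly on the assignment $e$, so it encodes the binary vector $(\delta_{0e},\dots,\delta_{2^m-1,e})^T$. In the notation of Theorem~\ref{thm:app-tensor}, this is precisely $F^0_{d,\mathrm{ex}}$ for the basis state $\ket{e}=(0,\{F^i_{x,\mathrm{ex}}\})$.

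\textbf{Step 2: the cofactor is a function of the internal variables only.} The cofactor $G^i:=F_x^i|_{\widetilde{q}_0\dots\widetilde{q}_{m-1}}$ substitutes constants for $q_0,\dots,q_{m-1}$, so it depends only on $q_m,\dots,q_{n-1}$. It therefore defines bit-slices $x_{t,\mathrm{in}}(i)$ for $0\le t<2^l$, namely $x_{t,\mathrm{in}}(i)=x_{(e\ll l)+t}(i)$. Write $\ket{\psi_{\mathrm{in}}}:=(k,\{G^i\})$. This is a legitimate tuple in the Tsai et al.\ encoding, with amplitudes $\alpha^{\mathrm{in}}_t=\alpha_{(e\ll l)+t}$, i.e.\ exactly the $e$-block of the original amplitude vector.

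\textbf{Step 3: apply Theorem~\ref{thm:app-tensor} in reverse.} Applied to $\ket{e}$ and $\ket{\psi_{\mathrm{in}}}$, Theorem~\ref{thm:app-tensor} gives $\ket{e}\otimes\ket{\psi_{\mathrm{in}}}=(k,\{F^0_{d,\mathrm{ex}}\land G^i\})$. By Step 1 this equals $(k,\{\widetilde{q}_0\cdots\widetilde{q}_{m-1}F_x^i|_{\widetilde{q}_0\dots\widetilde{q}_{m-1}}\})$, which is the post-measurement state from the first paragraph. Since the state is a product with a pure basis state on $Q_{\mathrm{ex}}[m]$, we get $\mathrm{Tr}_{\mathrm{ex}}(\ket{e}\bra{e}\otimes\ket{\psi_{\mathrm{in}}}\bra{\psi_{\mathrm{in}}})=\ket{\psi_{\mathrm{in}}}\bra{\psi_{\mathrm{in}}}$. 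This is the claimed state $(k,\{\hat{F}_x^i\})$.

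\textbf{Main obstacle: normalisation and the scalar $k$.} The encoding keeps $k$ unchanged, so $(k,\{\hat F_x^i\})$ is the \emph{unnormalised} conditional state. Its squared norm equals the outcome probability $p$. I would state explicitly that the theorem is meant up to this scalar, consistent with Theorem~\ref{thm:app-mid}. I would also point out that Algorithm~\ref{alg:seqsim} accounts for the factor $p$ separately via weighted model counting, so that no renormalisation of $k$ is needed inside the encoding. A secondary bookkeeping point is variable ordering: the index convention $s\ll l+t$ must place the external qubits as the high-order bits. I would fix this convention at the start so that Step 2's cofactor indexing matches Theorem~\ref{thm:app-tensor}.
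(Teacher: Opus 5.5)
Your argument is correct, but it is organised differently from the paper's proof.

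\textbf{The paper's route.} The paper works directly on the fully expanded DNF of $F_x^i$. Minterms containing the full product $\widetilde{q}_0\cdots\widetilde{q}_{m-1}$ survive the cofactor with their internal part unchanged. Every other minterm contains some $\overline{\widetilde{q}_t}$ and vanishes. The surviving function is then simply asserted to be the partial trace over the measured qubits.

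\textbf{Your route.} You pass through two results already in the appendix. Corollary~\ref{cor:app-mid-multi} gives the collapsed joint tuple $(k,\{\widetilde{q}_0\cdots\widetilde{q}_{m-1}F_x^i|_{\widetilde{q}_0\dots\widetilde{q}_{m-1}}\})$. Theorem~\ref{thm:app-tensor}, with $F^0_{d,\mathrm{ex}}=\widetilde{q}_0\cdots\widetilde{q}_{m-1}$ for the basis state $\ket{e}$, identifies that tuple as $\ket{e}\otimes\ket{\psi_{\mathrm{in}}}$. After that, tracing out a pure basis factor is trivial.

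\textbf{Shared core.} Both proofs rest on your Step~2 observation: the cofactor's bit-slices are exactly the $e$-block $\alpha_{(e\ll l)+t}$ of the amplitude vector. The paper uses this implicitly, while you state it.

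\textbf{What your route buys.} It is rigorous at exactly the point the paper glosses: why dropping the literal product and cofactoring is a partial trace and not merely a projection. The reason is the product structure with a basis state. It also shows that retention is the exact inverse of the composition step in Algorithm~\ref{alg:seqsim}. Finally, it makes explicit that the result is the \emph{unnormalised} conditional state, whose squared norm is the outcome probability; the paper leaves this implicit.

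\textbf{What the paper's route buys.} It is more self-contained, needing no other lemma. In exchange, it is less precise about the quantum-mechanical identification.
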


\begin{proof}
We begin by expressing $F_x^i$ in its fully expanded disjunctive normal form, so that every term explicitly contains all Boolean variables. This expansion enables us to partition the terms of $F_x^i$ into two distinct classes:
\begin{itemize}
    \item Terms containing $\widetilde{q}_0\dots\widetilde{q}_{m-1}$: For every term that includes the sequence $\widetilde{q}_0\dots\widetilde{q}_{m-1}$, the application of the substitution $F_x^i \big|_{\widetilde{q}_0\dots\widetilde{q}_{m-1}}$
    leaves the unmeasured qubits unchanged, thereby preserving their original Boolean values.
    
    \item All other terms: Any term that does not include the full sequence $\widetilde{q}_0\dots\widetilde{q}_{m-1}$ necessarily contains an occurrence of $\overline{\widetilde{q}_t}$ for some $0 \leq t < m$. When the operation $F_x^i \big|_{\widetilde{q}_0\dots\widetilde{q}_{m-1}}$ is applied, such terms evaluate to \texttt{false}, in accordance with the predetermined measurement outcome.
\end{itemize}

Thus, after applying the operation to all terms, the resultant expression $\hat{F_x^i}$ retains only those terms from $F_x^i$ that contain $\widetilde{q}_0\dots\widetilde{q}_{m-1}$. These surviving terms correspond exactly to the components of the quantum state that are consistent with the predefined measurement outcome, while the contributions from the measured qubits are eliminated. Therefore, $\hat{F_x^i}$ is equivalent to the partial trace of the quantum state over the measured qubits, projecting the state onto the subspace defined by the measurement outcome. This completes the proof.
\qed
\end{proof}

\paragraph{Relation to mid-circuit measurement.}
Corollary~\ref{cor:app-mid-multi} describes the \emph{in-body} Boolean update after a computational-basis measurement on $Q_{\mathrm{ex}}[m]$ has been resolved to fixed outcomes $M[q_0],\ldots,M[q_{m-1}]$.
Quantum mechanically, this is the usual selective (Lüders) update on the joint Hilbert space $\mathcal{H}_{\mathrm{ex}}\otimes\mathcal{H}_{\mathrm{in}}$: the state is conditioned onto the joint eigenspace fixed by those $M[q_j]$, while the literals $\widetilde{q}_j$ supplied in Corollary~\ref{cor:app-mid-multi} mark that branch in the $(k,\{F_x^i\})$ encoding \emph{without} tracing out $\mathcal{H}_{\mathrm{ex}}$, because coherent gates in~$\mathcal{C}'$ may still couple $Q_{\mathrm{ex}}[m]$ to $Q_{\mathrm{in}}[l]$ before the iteration ends.
Theorem~\ref{thm:app-trace}, by contrast, targets the iteration boundary once the same $M[q_j]$ are fixed: only the internal subsystem is carried to the next round, i.e.\ the reduced density operator obtained by $\mathrm{Tr}_{\mathcal{H}_{\mathrm{ex}}}$ from the joint post-measurement state.
Recording $F_x^i|_{\widetilde{q}_0\cdots\widetilde{q}_{m-1}}$ and dropping the explicit product $\widetilde{q}_0\cdots\widetilde{q}_{m-1}$ at retention therefore matches updating the interface to the marginal internal state passed into the following tensor-product step, rather than offering an alternative reading of Corollary~\ref{cor:app-mid-multi} as a purely syntactic Boolean rewrite inside~$\mathcal{C}'$.

\bigskip

\section{Worked example: one RUS iteration}
\label{app:rus-example}

We next give an example that illustrates Boolean state composition and retention in the sense of Theorems~\ref{thm:app-tensor} and~\ref{thm:app-trace}.
The SQC in Figure~\ref{fig:app-rus} realises an $X$ gate from the Repeat-Until-Success construction~\cite{paetznick2013repeat}: when the measurement on $q_0$ returns outcome~$1$, an $X$ rotation is applied to~$q_1$, and the procedure is repeated until that success event occurs.

\begin{figure}[H]
\centering
\includegraphics[width=0.35\textwidth]{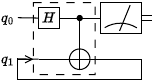}
\caption{A SQC for the Repeat-Until-Success protocol of implementing $X$ gate.}
\label{fig:app-rus}
\end{figure}

In Figure~\ref{fig:app-rus}, the internal state $\ket{\psi_{\mathrm{in}}}$ on $q_1$ is initialised to $\ket{0}$, encoded as $\ket{\psi_{\mathrm{in}}}=(0,\{F_{x,\mathrm{in}}^i\})$, while all other Boolean functions are \texttt{false}.
Here
\[
  F_{d,\mathrm{in}}^0=\overline{q_1}.
\]

The external input on $q_0$ is $\ket{\psi_{\mathrm{ex}}}=\ket{0}$, encoded as $\ket{\psi_{\mathrm{ex}}}=(0,\{F_{x,\mathrm{ex}}^i\})$, while all other Boolean functions are also \texttt{false}, with
\[
  F_{d,\mathrm{ex}}^0=\overline{q_0}.
\]

By Theorem~\ref{thm:app-tensor}, composition yields the total state on $\{q_0,q_1\}$,
\[
  \ket{\psi_{\mathrm{ex}}}\otimes\ket{\psi_{\mathrm{in}}}
  =\bigl(0,\{F_{d,\mathrm{ex}}^0\land F_{x,\mathrm{in}}^i\}\bigr)
  =\bigl(0,\{F_{x,\text{total}}^i\}\bigr),
\]
with all other Boolean functions remaining \texttt{false}, and
\[
  F_{d,\text{total}}^0=\overline{q_0q_1}.
\]

Subsequently, the total state undergoes Boolean updates through quantum gates $H$ and $\mathrm{CNOT}_{q_0\rightarrow q_1}$ following the gate rules in~\cite{tsai2021bit}, yielding $(1,\{F_{x,\text{total}}^i\})$ while all other Boolean functions remain \texttt{false}, with
\[
  F_{d,\text{total}}^0=q_0q_1\lor \overline{q_0q_1}.
\]

If the predefined measurement outcome is $M[q_0]=1$, Theorem~\ref{thm:app-trace} updates the tuple through retention to a new internal state $(1,\{\hat{F}_{x,\mathrm{in}}^i\})$ on $q_1$, while the other Boolean functions remain \texttt{false}, and
\[
  \hat{F}_{d,\mathrm{in}}^0=F_{d,\text{total}}^0\big|_{q_0}=q_1.
\]
Thus one iteration concludes with $(1,\{\hat{F}_{x,\mathrm{in}}^i\})$ as the internal input for the next round.

\endgroup

\end{document}